\newcommand{\Rmnum}[1]{\expandafter\@slowromancap\romannumeral #1@}
\newtheorem{thm}{Theorem}
\newtheorem{ass}{Assumption}
\newcommand{\calm}{\mathcal{M}}     
\newcommand{\cals}{\mathcal{S}}     
\newcommand{\calt}{\mathcal{T}}
\newcommand{\cali}{\mathcal{I}}
\newcommand{\caly}{\mathcal{Y}}
\newcommand{\bt}{\boldsymbol{t}}
\newcommand{\bx}{\boldsymbol{x}}
\newcommand{\by}{\boldsymbol{y}}
\newcommand{\bp}{\boldsymbol{p}}
\newcommand{\opt}{\texttt{OPT}\xspace}
\newcommand{\alg}{\texttt{ALG}\xspace}
\newcommand{\CR}{\texttt{CR}\xspace}
\newcommand{\osac}{\texttt{OSARA}\xspace}
\newcommand{\csp}{\texttt{CSP}\xspace}
\newcommand{\sara}{\texttt{SARA}\xspace}
\newcommand{\opa}{\texttt{OPA}\xspace}
\newcommand{\fta}{\texttt{FTA}\xspace}
\newcommand{\microopt}{\texttt{RAA}\xspace}
\newcommand{\ie}{i.e., }
\newcommand{\cf}{cf., }
\newcommand{\fig}[1]{Fig.~\ref{#1}}
\newcommand{\sect}[1]{Section~\ref{#1}}
\pgfplotsset{
  every axis plot/.append style={line width=1.2pt},
}
\tikzset{%
  >={Latex[width=2mm,length=2mm]},
         base/.style = {rectangle, rounded corners, draw=black,
                           minimum width=3cm, minimum height=1cm,
                           text centered, font=\rmfamily},
         terminal/.style = {base, circle, minimum size=1.5cm,font=\rmfamily}
}
\begin{document}

\bstctlcite{IEEEexample:BSTcontrol} 
\title{Data-driven Online Slice Admission Control and Resource Allocation for 5G and Beyond Networks}

\author{\IEEEauthorblockN{ Muhammad Sulaiman\IEEEauthorrefmark{1}\thanks{\IEEEauthorrefmark{4}Equal contribution.}\IEEEauthorrefmark{4}, Bo Sun\IEEEauthorrefmark{1}\IEEEauthorrefmark{4}, Mohammad A. Salahuddin\IEEEauthorrefmark{1},\\Raouf Boutaba\IEEEauthorrefmark{1}, Aladdin Saleh\IEEEauthorrefmark{2} \\}
\IEEEauthorblockA{\{m4sulaim, b24sun, mohammad.salahuddin, rboutaba\}@uwaterloo.ca, aladdin.saleh@rci.rogers.com}\\
\IEEEauthorblockA{\IEEEauthorrefmark{1}University of Waterloo, 
\IEEEauthorrefmark{2}Rogers Communications Canada, Inc.} 
}

\maketitle
\IEEEpubidadjcol

\begin{abstract}

Virtualization in 5G and beyond networks allows the creation of virtual networks, or network slices, tailored to meet the requirements of various applications. However, this flexibility introduces several challenges for infrastructure providers (InPs) in slice admission control (AC) and resource allocation. To maximize revenue, InPs must decide in real-time whether to admit new slice requests (SRs) given slices' revenues, limited infrastructure resources, unknown relationship between resource allocation and Quality of Service (QoS), and the unpredictability of future SRs. To address these challenges, this paper introduces a novel data-driven framework for 5G slice admission control that offers a guaranteed upper bound on the competitive ratio, i.e., the ratio between the revenue obtained by an oracle solution and that of the online solution. The proposed framework leverages a pricing function to dynamically estimate resources' pseudo-prices that reflect resource scarcity. Such prices are further coupled with a resource allocation algorithm, which leverages a machine-learned slice model and employs a primal-dual algorithm to determine the minimum-cost resource allocation. The resource cost is then compared with the offered revenue to admit or reject a SR. To demonstrate the efficacy of our framework, we train the data-driven slice model using real traces collected from our 5G testbed. Our results show that our novel approach achieves up to $42\%$ improvement in the empirical competitive ratio, \ie ratio between the optimal and the online solution, compared to other benchmark algorithms. \end{abstract}
\begin{IEEEkeywords}
5G, Network Slicing, QoS, Admission Control, Optimization, Data-driven Algorithms, Online Algorithms
\end{IEEEkeywords}
\section{Introduction}
Network Function Virtualization (NFV) and Software-defined Networking (SDN) are enabling technologies to realize network slicing in 5G and beyond mobile networks. Network slicing allows the creation of isolated virtual networks, atop an underlying physical infrastructure, each tailored to meet the diverse requirements of distinct service modalities, such as enhanced Mobile Broadband (eMBB) and Ultra-Reliable Low-Latency Communications (URLLC). Anticipated future developments in 5G and beyond networks envision a commoditized landscape, where service providers (SPs) may procure network slices from infrastructure providers (InPs) to cater to specific market segments. However, constrained by limited resources, an InP may find itself unable to accommodate all slice requests (SRs) from myriad SPs. To manage its revenue, the InP could deploy a slice admission control (AC) mechanism that evaluates the resource demands of varying SRs against their offered revenues (i.e., values).

Slice AC in 5G and beyond networks presents several challenges. The first challenge is the \textit{absence of a well-defined slice model} that can precisely capture the relationship between the quality of service
(QoS) and the resource requirements for a given slice. This is attributed to  the service-level agreement (SLA) between SP and InP, which may specify QoS thresholds rather than concrete resource requirements. However, the QoS enjoyed by a slice depends on the resource distribution across various Virtual Network Functions (VNFs) within disparate network segments, and achieving equivalent QoS outcomes may be possible through different resource allocation strategies. Additionally, the SP's lack of visibility into the InP's proprietary VNF implementations exacerbates the problem as it precludes the stipulation of explicit resource requirements in advance. This is further complicated by the QoS being characterized by disparate key performance indicators (KPIs) across different types of slices. For instance, an URLLC slice may prioritize latency, while an eMBB slice may prioritize throughput. Therefore, an AC algorithm must possess the capability to deduce feasible resource allocation permutations that align with the specified QoS parameters prior to making an admission decision. Should admission be granted, the algorithm further needs to specify the optimal resource allocation permutation. This allocation should not only fulfill the QoS requirements for the newly admitted slice but also ensure resource availability for future SRs.

The second challenge of AC lies in the \textit{uncertainty about the slice information}. A prevalent assumption made in the current AC literature is that SR traffic specifications, i.e., the SR arrival-rate, their QoS requirements and the corresponding offered revenues, are either known in advance or can be characterized by a stationary distribution \cite{sulaiman2022coordinated, Ghina_2020, vanHuynh.2019, RR_2019}. These assumptions may not accurately reflect a realistic scenario where SRs are presented sequentially, necessitating immediate adjudication for each request independently of preceding or subsequent SRs. Additionally, the presence of multiple InPs offers SPs the freedom to redirect their SRs towards alternative InPs \cite{TNET-1}. Removing these assumptions precludes the solutions that model the AC problem as an offline problem, where the information of all SRs is known a priori. In addition, when SR traffic is non-stationary, Reinforcement Learning (RL)-based methods may struggle with non-convergence or prolonged convergence periods, due to their inherent assumption of a stationary environment \cite{rl-1, rl-2}.

The third challenge pertains to traditional AI-based AC algorithms, particularly those using RL \cite{sulaiman2022coordinated, Ghina_2020, vanHuynh.2019}, which often \textit{lack the robustness and explainability} needed for practical use. This is crucial, especially when denying SRs, as it hinders SPs from understanding or contesting decisions. Moreover, these algorithms do not offer performance guarantees and are susceptible to manipulations, such as an SP flooding the system with malicious SRs to monopolize resources, leading to inefficient network usage.

To tackle the aforementioned challenges, we propose a novel slice admission control (SAC) framework that integrates a data-driven slice model with Lagrangian decomposition and gradient-descent  for resource allocation. This resource allocation algorithm provides a near-optimal solution, which concurrently minimizes the resource usage cost and satisfies the QoS requirements. The proposed SAC framework achieves online SAC by using a pricing function that dynamically adjusts resource prices in real-time based on their scarcity. Consequently, only the SRs that offer higher revenue than their resource usage cost are accepted. The design of this pricing function ensures an upper limit on the competitive ratio, i.e., the ratio between the revenue obtained by an oracle solution and that of the online solution. In this paper, our main contributions are as follows: 

\begin{itemize}[leftmargin=*]
    \item \textbf{Joint resource allocation and slice admission control.} Our proposed SAC framework jointly optimizes resource allocation and online slice admission control. Instead of relying on a predetermined resource allocation, our approach dynamically determines resource allocation based on the SR's SLA and the current state of network resources. The admission control decision is then made using this real-time resource allocation.
    
   \item \textbf{Data-driven slice model and resource allocation.} Our slice modeling approach (from \cite{vnetrunner}) employs deep learning to model individual VNFs, and then composes the trained VNF models into an end-to-end (E2E) slice model. The E2E slice model is used to predict the QoS, and the QoS degradation is then calculated using the reparameterization trick. This allows to leverage back-propagation using existing ML-frameworks for optimizing resource allocation. In this work, we integrate this per-VNF slice model with our resource optimization algorithm (from \cite{microopt}). This gradient-based algorithm uses primal-dual optimization and achieves near-optimal resource allocation, while minimizing the resource usage cost and satisfying predefined QoS requirements.

    \item \textbf{Robust online algorithm for admission control.} We develop a novel online price-based algorithm for SAC. This algorithm sets a pseudo-price for each resource based on its utilization, estimates the total cost of serving the slice using the resource allocation algorithm, and admits the slice only if its value (i.e., offered revenue) is larger than the estimated cost. By carefully designing the resource prices, our proposed online algorithm is proven to attain a bounded competitive ratio, providing a theoretical guarantee on the worst-case performance under the competitive analysis framework \cite{borodin2005online}. 
    
    \item \textbf{Extensive experiments using real 5G testbed.} We train and validate the different components of the proposed framework using a full-fledged 5G testbed utilizing open-source components, including Open vSwitch (OvS) \cite{ovs}, srsRAN \cite{srsran}, and Open5GS \cite{open5gs}. The VNF and slice models are trained and evaluated using real traces from the testbed, and the resource allocation algorithm is compared to an optimal solution obtained through Gurobi. Unlike previous approaches that often assume fixed slice resource requirements, we integrate resource allocation as a subroutine within the online SAC algorithm. Our results show that that the proposed online SAC algorithm significantly outperforms benchmark algorithms.

    \item \textbf{Explainability.} Unlike the current state-of-the-art in slice admission control, our approach goes beyond mere slice acceptance or rejection decisions. It provides InP the ability to analyze the slice request admission decisions through cost/revenue analysis for each network resource. Additionally, the proposed algorithm can be used to derive the best achievable QoS given SR requirements and the offered revenue.
    
\end{itemize}

The remainder of the paper is organized as follows. In Section \ref{sec:bg_related}, we provide an overview of the related works. In Section \ref{sec:prob_statement}, we formally define the problem. Section \ref{sec:solution} describes the proposed solution, where we detail the resource allocation, and admission control algorithms. In Section \ref{sec:implementation}, we provide an exhaustive depiction of the testbed, elaborating on the technical specifications and configuration. Section \ref{sec:results} is dedicated to the discussion of results, where we expose the outcomes of our empirical evaluations. Finally, in Section \ref{sec:conclusion}, we conclude and instigate future research directions. Table \ref{tab:abbreviations} shows the frequently used abbreviations in this paper.

\begin{table}[!t]
\centering
\footnotesize
\caption{List of frequently used abbreviations}
\begin{tabular}{l l}
\toprule
\textbf{Abbr.} & \textbf{Meaning}\\
\midrule
CR & Competitive Ratio \\
CSP & Cost-minimization Single-slice Problem \\
E2E & End-to-end \\
ECR & Empirical Competitive Ratio \\
\fta & Fixed-threshold-based Algorithm \\
InP & Infrastructure Provider \\ 
\opa & Online Price-based Algorithm \\
\osac & Online Slice Admission Control and Resource Allocation\\
QoS & Quality of Service \\
\microopt & Resource Allocation Algorithm\\
RAN & Radio Access Network \\
SAC & Slice Admission Control \\
\sara & Slice Admission Control and Resource Allocation \\
SLA & Service-level Agreement \\
SP & Service Provider \\
SR & Slice Request \\
VNF & Virtual Network Function \\
\bottomrule
\end{tabular}
\label{tab:abbreviations}
\end{table}

\section{Related Works}\label{sec:bg_related}

The quintessential components of AC and resource allocation include: (i) a network model, (ii) a resource allocation algorithm, and (iii) an admission control algorithm. 
Specifically, the network model is used to estimate QoS based on SR specification and allocated resources. The resource allocation algorithm aims to satisfy SLA constraints while minimizing the resource usage cost. Finally, the AC algorithm is used to strategically accept or reject SRs based on their resource requirements and current network utilization, in order to maximize InP's revenue. In related works, these modules may not always be distinct. For instance, some studies assume known or easily derivable resource demands \cite{sulaiman2022coordinated,kasgari2018stochastic, salvat2018overbooking}, bypassing a separate resource allocation algorithm. On the other hand, RL for resource allocation implies that RL agents implicitly learn the slice model \cite{liu2021constraint, LiWeighted}.

\subsection{Network Modeling}

In many practical scenarios, testing various resource allocation policies directly on actual networks is infeasible. Consequently, network digital twins or network models provide a safer alternative for estimating network behavior. To this end, network simulators \cite{liu2021constraint, liu2022atlas} and Machine Learning (ML)-based estimators \cite{noms23, yang2022deepqueuenet, ferriol2023routenet} are frequently utilized. On one hand, traditional network simulators (e.g., ns-3) simulate the network at the packet-level and require substantial computation and time, limiting their use in online resource allocation \cite{yang2022deepqueuenet,ferriol2023routenet}. Additionally, these simulators often struggle to accurately replicate real-world conditions, particularly in wireless environments \cite{liu2022atlas}.

On the other hand, ML-driven approaches model the network using neural networks, which are trained using real-world or simulated network traffic traces. Once trained, these models offer near instantaneous inference of the network behavior. Regression-based network models (e.g., \cite{noms23}), Graph neural networks (GNNs) (e.g., \cite{ferriol2023routenet}), Bayesian neural networks (e.g., \cite{liu2022atlas}), and a combination of simulation and deep neural network (DNN) models (e.g., \cite{yang2022deepqueuenet}) have been used in this context. Our previous work in \cite{vnetrunner} evaluated network slice modeling across two dimensions: (i) VNF-level versus slice-level modeling, and (ii) packet-level versus flow-level modeling. The results demonstrated that VNF-level and flow-level modeling provides the optimal balance between accuracy and inference time.

\subsection{Resource Allocation}

Resource allocation and scheduling have been addressed using both ML and traditional optimization methods \cite{sulaiman2022coordinated, sciancalepore2019rl}. While conventional ML architectures, such as the encoder-decoder model \cite{bega2019deepcog}, have been utilized, Constrained Reinforcement Learning has emerged as a popular choice \cite{liu2021constraint,noms23,liu2021onslicing}. However, RL-based methods often suffer from extended convergence times \cite{ liu2021constraint,noms23}, rendering them impractical for online resource allocation. Additionally, these approaches necessitate retraining when actual online conditions differ from the training scenarios. \citet{liu2022atlas} introduced Bayesian optimization for SLA compliant resource allocation in network slices. However, similar to RL-based methods, their approach also suffers from lengthy convergence times in the order of several hours. Our previous work in \cite{microopt} introduced a gradient-based resource optimization approach, which utilizes a deep learning-based network slice model to achieve fast and near-optimal resource allocation.

Several studies have adopted traditional optimization methods for performance enhancement in multi-slice scenarios \cite{kasgari2018stochastic, salvat2018overbooking}. However, such approaches generally assume that resource requirements are known or can be easily inferred from SLA requirements. While this may be applicable to Physical Resource Block (PRB) allocation at a single base station, it does not extend to E2E slices, which require diverse resource types across different network segments, such as compute, and bandwidth resources for VNFs.

\subsection{Slice Admission Control}

A number of recent works have addressed the challenge of 5G slice AC through various methodologies, predominantly utilizing trial-and-error-based techniques, \ie RL~\cite{sulaiman2022coordinated, vanHuynh.2019, RR_2019, TNSM_GNN, Roig.2019, bega2019machine} or the multi-armed bandit framework \cite{ONETS}.  \citet{Ghina_2020} proposed a traditional RL approach for managing 5G slice admission and congestion control. \citet{vanHuynh.2019} utilized a DRL-based slice AC and resource allocation. \citet{RR_2019} proposed a policy-based RL algorithm for slice AC in 5G C-RAN, with a focus on computing resource requirements at both remote and central sites based on latency needs. 
\citet{bega2019machine} used a multi-agent RL-based approach to maximize InP's revenue, where a separate agent predicts the revenue in case of accepting and rejecting SRs, respectively. Similarly, in our previous works \cite{sulaiman2022coordinated, noms22}, we proposed a multi-agent DRL-based approach to jointly optimize slice AC and VNF placement, and leveraged  GNNs \cite{TNSM_GNN} to accommodate for large scale and dynamic substrate network topologies.

Departing from RL-based methods, \citet{ONETS} introduced an online network slice brokering solution to maximize multiplexing gains, akin to a multi-armed bandit problem. Meanwhile, \citet{TNET-1} explored slice AC challenges, adding new constraints overlooked in previous literature. This includes scenarios with rational tenants and competitive InPs. The authors approached the multi-dimensional knapsack problem by dividing it into inter-slice admission and intra-slice quota allocation, proposing heuristic and auction mechanisms for these sub-problems.

However, the referenced works have several notable shortcomings. The primary limitation is the presumption of a stationary distribution of SRs \cite{sulaiman2022coordinated, Ghina_2020, vanHuynh.2019, RR_2019}. This assumption may not align with real-world scenarios where SR characteristics (e.g., arrival rates, resource demands, and offered revenues) may be non-stationary or even adversarial in nature. For instance, in dynamic market environments, SPs may alter their behaviors in response to the InP's current admission policies \cite{TNET-1}. RL-based methodologies traditionally rely on the premise of a stationary environment. Consequently, deviations in the SR distribution pose significant challenges, potentially hindering the convergence of RL-based approaches or resulting in high
regret.

Moreover, several of these studies presuppose the availability of explicit resource demand information with the SR \cite{Ghina_2020, vanHuynh.2019, RR_2019, TNET-1}. This assumption does not accurately reflect the operational modalities of 5G and beyond networks, where SPs are more likely to specify QoS thresholds rather than explicit resource requirements. The complexity of this issue is compounded by the variability in VNF implementations, which can result in different resource allocation combinations to achieve the same level of QoS.

\section{Problem Statement} \label{sec:prob_statement}

We address the joint Slice Admission control and Resource Allocation problem (\sara) for 5G and beyond network InPs. Consider an InP substrate network with $M$ general resources, where each resource $m\in \calm:=\{1,\dots,M\}$ represents either the bandwidth of a link or the computing resource of a node. Let $C_m$ denote the capacity of resource $m$.  
We assume a time-slotted system with a slot set $\calt:=\{1,\dots,T\}$, where a set of SRs 
$\cals :=\{1,\dots,S\}$ arrive sequentially. The information about each SR $i$ is represented by $I_i:= \{v_i, \calt_i, \boldsymbol{\xi}_i, \boldsymbol{\beta}_i\}$. $v_i$ is the value of SR, indicating price the SP is willing to pay for the request. $\calt_i:= \{t^a_i,\dots, t^d_i\}$ is the service period, where $t^a_i$ and $t^d_i$ are the arrival and departure slots of the SR, respectively. 
$\boldsymbol{\xi}_i$ is a slice feature vector that contains all necessary information for defining the SR's QoS (e.g., the QoS threshold $q^i_{\textit{thresh}}$ and the peak traffic  distribution $\Lambda_i$), while $\boldsymbol{\beta}_i$ is the acceptable QoS degradation threshold for the SR. Table \ref{tab:math_notations} shows the key notations used widely in the paper.

\begin{table}[!h]
\centering
\footnotesize
\caption{List of key notations}
\begin{tabular}{l l}
\toprule
\textbf{Notation} & \textbf{Description} \\
\midrule
$m\in \calm$ & Set of considered resources\\ 
$C_m$ & Capacity of resource $m$ \\
$\by_i$ & Resource allocation to SR $i$ \\
$\boldsymbol{\xi}_i$ & Feature vector of SR $i$, contains  $(q^i_{\textit{thresh}}, \boldsymbol{\beta}_i, v_i, \calt_i, \Lambda_i)$\\
$q_i(n_i, \by_i)$ & QoS distribution for slice $i$ with resources $\by_i$\\
$q^i_{\textit{thresh}}$ & QoS requirement for SR $i$ \\
$\boldsymbol{\beta}_i$ & QoS degradation threshold of slice $i$ \\
$v_i$ & Value (i.e., offered revenue) of SR $i$ \\
$\calt_i$ & Service period of SR $i$ \\
$n_i \sim \Lambda_i$ & Peak-traffic distribution for SR $i$\\
$R_m$ & Allocation upper limit for resource $m$ \\
$F_i(\by_i; \boldsymbol{\xi}_i, \calt_i)$ & QoS degradation function \\
$x_i$ & Admission control decision for SR $i$ \\
$w_{m,t}$ & Utilization of resource $m$ at time $t$ \\
$\phi_{m,t}(w)$ & Pricing function for resource $m$ at time $t$ \\
$\bp^{(i)}$ & Pseudo-price vector for SR $i$ \\
$c_i^*$ & Optimal cost for serving slice $i$ \\
$\sigma$ & Approximation factor for \csp solution\\
${\bt}_{\textit{in}} {\bt}_{\textit{out}}$ & Ingress and egress slice  traffic feature vector\\
$L, U$ & Lower and upper limits for slice value density\\
$V$ & Upper limit for variation of resource allocation\\
\bottomrule
\end{tabular}
\label{tab:math_notations}
\end{table}

\textbf{Decisions.} Upon the arrival of each SR $i\in\cals$, the problem is to immediately and irrevocably determine the admission control decision $x_i \in \{0,1\}$, i.e., whether to admit SR $i$, and a corresponding resource allocation $\by_i := \{y_{i,m}^t\}_{m\in\calm,t\in\calt}$. $y_{i,m}^t$ is the amount of resource $m$ allocated to SR $i$ at time slot $t$.
The resource allocation of each slice $i$ for resource $m$ is constrained by $R_{m}$ to prohibit the slice from exhausting resource $m$. 
Given $x_i$, let $\caly_i(x_i)$ denote the feasible set of resource allocation $\by_i$. 
If SR $i$ is rejected (i.e., $x_i = 0$), no resource is allocated, and then $\caly_i(0):= \{\by_i: y_{i,m}^t = 0,\forall m\in\calm, t\in\calt\}$.
If SR $i$ is admitted (i.e., $x_i = 1$), a feasible resource allocation must guarantee the QoS requirement, therefore, $\caly_i(1):= \{\by_i: F_i(\by_i;\boldsymbol{\xi}_i,\calt_i) \le \boldsymbol{\beta}_i, R_m \ge y_{i,m}^t \ge 0, \forall m\in\calm, t\in\calt_i, y_{i,m}^t = 0, \forall m\in\calm, t\in \calt\setminus\calt_i\}$, where $F_i(\by_i;\boldsymbol{\xi}_i,\calt_i)$ denotes the QoS degradation during the service period $\calt_i$, given feature vector $\boldsymbol{\xi}_i$ and resource allocation $\by_i$. 

\textbf{Slice model and QoS requirement.}
A network model determines the QoS metric (e.g., latency) for a given slice feature vector $\boldsymbol{\xi}_i$ and resource allocation $\by_i$.
In this paper, we assume that the QoS is impacted by the peak-traffic of a slice, i.e., the number of users. Therefore, a network model $q_i(n_i,\by_i)$ gives the QoS performance when peak traffic is $n_i$ and resource allocation is $\by_i$.
The slice feature vector $\boldsymbol{\xi}_i$ includes a QoS threshold $q^i_{\textit{thresh}}$ and the peak-traffic distribution $\Lambda_i$, where $n_i \sim \Lambda_i$.  
 
Considering fair resource allocation and the same QoS experienced by each user on average, we can define the average QoS degradation for slice $i$ as
 \begin{align} \label{beta}         F_i(\by_i;\boldsymbol{\xi}_i,\calt_i) = \frac{\mathbb{E} \left[n_i \cdot{ {\mathbb{I}_{[q_{i}(n_i, \by_{i}) \leq q^i_{\textit{thresh}}]}}}\right]}{ {\mathbb{E} 
         \left[n_i\right]}}.
 \end{align}

It is required that the SLA of any admitted slice $i$ must be met, i.e., $F_i(\by_i;\boldsymbol{\xi}_i,\calt_i) \le \boldsymbol{\beta}_i$. 
Note that although this paper focuses on one specific way of defining the feature vector, network model and QoS constraint, our algorithmic framework can be customized for other network models and QoS constraints.

\textbf{Offline problem.} The goal of \sara is to determine the admission control decision and resource allocation, such that the total value of admitted slices is maximized, the QoS requirements of all admitted slices are satisfied, and the capacities of all resources are respected.
Let $\cali :=\{v_i, \boldsymbol{\xi}_i, \calt_i, \boldsymbol{\beta}_i\}_{i\in\cals}$ denote an instance of the problem.
Given the information $\cali$ of all slices in advance, the offline problem can be written as:
\begin{subequations}
\label{p:sac}
\begin{align}
\label{eq:sac-obj}
    \max_{x_i, \by_{i}} \quad& \sum\nolimits_{i\in\cals} v_i x_i\\
    \label{eq:sac-capacity}
    {\rm s.t.}\quad & \sum\nolimits_{i\in\cals} y_{i,m}^t \le C_m, \forall m\in \calm, t\in\calt,\\ 
    \label{eq:sac-qos}
    &x_{i} \in \{0,1\}, \by_i \in \caly_i(x_i), \forall i\in\cals,
\end{align}
\end{subequations}
where the objective~\eqref{eq:sac-obj} maximizes the total value of admitted slices, constraint~\eqref{eq:sac-capacity} ensures that there are no capacity violations over all resources across the time horizon, and constraint~\eqref{eq:sac-qos} guarantees that feasible admission and resource allocation decisions can satisfy the QoS constraints.

\textbf{Online formulation.}
We also formulate an online version of \sara (\osac),
where the set of network slices arrive one by one. For each arrival, we must make admission decision and resource allocation without a priori knowledge of future slices.
Let $\alg(\cali)$ and $\opt(\cali)$, respectively, denote the total values obtained by an online algorithm and the offline algorithm under an instance $\cali$. The performance of the online algorithm is evaluated by its competitive ratio ($\CR$), i.e., $\CR = \max_{\cali \in \Omega} {\opt(\cali)}/{\alg(\cali)}$,
where $\Omega$ is the set of all possible instances. $\CR$ is a classic information-theoretic performance metric, which quantifies the performance of an online algorithm versus the offline algorithm in the worst-case scenario under the framework of competitive analysis~\cite{borodin2005online}. 
An algorithm with bounded \CR ensures robustness, and we aim to design an online algorithm that can minimize $\CR$.

\section{Online Algorithms for \osac} \label{sec:solution}
\begin{figure*}[h!]
 \centering
  \includegraphics[width=0.80\linewidth]{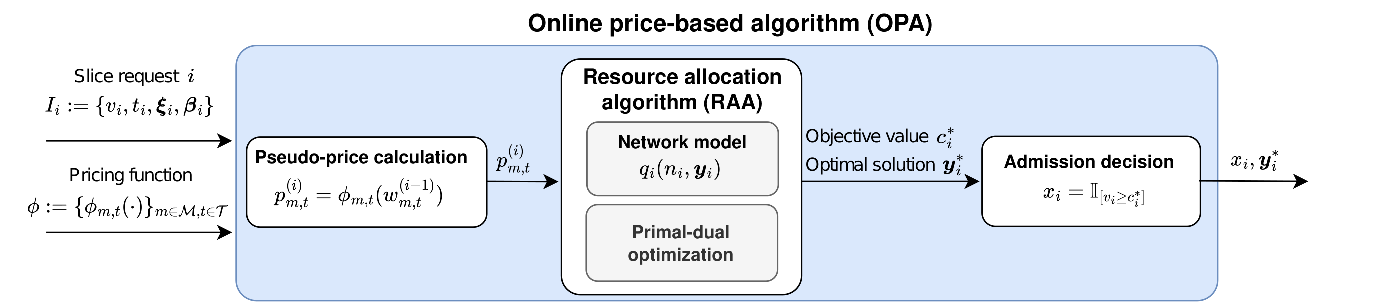}
    \caption{Overview of \opa for \osac}
    \label{fig:opa_raa}
\end{figure*}

Primarily, we face two main algorithmic challenges when designing algorithms for \osac: (i) \textit{unknown information about future requests}---without the knowledge of future slices, we must cautiously make admission decisions to balance the revenue from waiting for possible high-value slices with the risk that high-value slices may never show up, and (ii) \textit{network model without a precise formula}---due to the lack of an explicit formulation for QoS, we not only need to predict/construct the network model using observations or historical data, but also efficiently optimize resource allocation using the built model. 

To address the first challenge, we propose a price-based algorithm (\opa) that sets a price for using a unit of each resource at each time and admits a slice only when its value is larger than the cost of the resource consumption for serving the slice.
The prices are carefully designed to indicate the scarcity of resources and thus balance the immediate revenue and future opportunities.  

To handle the second challenge, we devise a data-driven resource allocation algorithm (\microopt). For this purpose, we construct a network model using deep learning and then optimize the resource allocation using a gradient-based method. \fig{fig:opa_raa} shows a high-level view of the proposed solution for \osac.
In the following, we provide the details of our algorithms that address these two challenges in Section~\ref{sec:alg-ac} and Section~\ref{MicroOpt}, respectively.

\subsection{Online Admission Control Algorithm}
\label{sec:alg-ac}

We propose $\opa$, an online price-based algorithm for \osac, in Algorithm~\ref{alg:ota}. 
The core idea is to estimate the cost of serving a slice based on the real-time resource utilization of the substrate network, and then admit the slice only when its value is larger than the estimated cost.
Towards this goal, we define a pricing function $\phi:=\{\phi_{m,t}\}_{m\in\calm,t\in\calt}$, where $\phi_{m,t}(w_{m,t})$ is a monotonically non-decreasing function that estimates the price of using resource $m$ at time $t$ when its utilization is $w_{m,t}$, where $w_{m,t} \in [0, C_m]$.

$\opa$ takes a pricing function $\phi$ as its input. 
For each slice $i$, the algorithm sets a pseudo-price vector as $\bp^{(i)} = \{p_{m,t}^{(i)}\}_{m\in\calm,t\in\calt}$, where $p_{m,t}^{(i)} = \phi_{m,t}(w_{m,t}^{(i-1)})$ and $w_{m,t}^{(i-1)}$ is the utilization of resource $m$ at time $t$ after processing the previous $i-1$ slices. 
Then \opa estimates the cost of admitting the slice $i$ by solving a single slice problem.

\textbf{Cost-minimization Single-slice Problem (\csp).}
For each slice $i\in\cals$, given the pseudo-price $\bp^{(i)}$ and the slice $i$'s information $\{\boldsymbol{\xi}_i,\boldsymbol{\beta}_i,\calt_i\}$, \opa solves a single slice problem $\csp(\bp^{(i)};\boldsymbol{\xi}_i,\boldsymbol{\beta}_i,\calt_i)$ to find the resource allocation that minimizes the cost of serving the slice. 
\begin{subequations}
\label{p:csp}
\begin{align}
\min_{\by_{i}} \quad& \sum\nolimits_{t\in\calt_i}\sum\nolimits_{m\in\calm} p_{m,t}^{(i)} y_{i,m}^t\\
\label{eq:csp}
    {\rm s.t.}\quad 
    & F_i(\by_i;\boldsymbol{\xi}_i,\calt_i) \le \boldsymbol{\beta}_i, \\
    & 0 \le y_{i,m}^t \le R_{m}, \forall m\in\calm, t\in\calt_i.
\end{align}
\end{subequations}
Let $\by_i^*$ and $c_i^*$ denote the optimal solution and objective value of \csp. Note that \csp is already a challenging problem since we cannot have a precise formula for $F_i(\by_i;\boldsymbol{\xi}_i,\calt_i)$ in constraint~\eqref{eq:csp}, and additionally constraint~\eqref{eq:csp} is usually not a convex constraint.  
Thus, \csp cannot be solved optimally most of the time. 
Suppose we have a $\sigma$-approximate algorithm that can obtain approximate  solution $\Tilde{\by}_i$ and objective value $\Tilde{c}_i$ of \csp such that $c_i^*\le \Tilde{c}_i \le  \sigma c_i^*$.
 
Then the admission control (in Line~\ref{alg:ac}) admits slice $i$ if the slice's value is larger than the scaled estimated cost $\Tilde{c}_i /\sigma$.
In addition, $\tilde{\by}_i$ is the corresponding resource allocation if the slice is admitted.

In this paper, we propose to solve \csp using a data-driven approach in Section~\ref{MicroOpt}.
Before proceeding to that section, we show that by carefully designing the pricing function $\phi$, $\opa$ can achieve a bounded competitive ratio, providing performance guarantees even under worst-case instances.

\begin{algorithm}[t!]
\caption{Online Price-based Algorithm ($\opa$)}
\begin{algorithmic}[1]
\STATE \textbf{Input:} pricing function $\phi:=\{\phi_{m,t}(\cdot)\}_{m\in\calm, t\in\calt}$;
\STATE \textbf{Initialization:} resource utilization $w_{m,t}^{(0)} = 0$, initial price $p_{m,t}^{(1)} = \phi_{m,t}(w_{m,t}^{(0)}), \forall m\in\calm, t\in\calt$;
\WHILE{slice $i$ arrives}
\STATE observe its value $v_i$, service period $\calt_i$, feature $\boldsymbol{\xi}_i$, and QoS requirement $\boldsymbol{\beta}_i$;
\STATE {solve the ancillary problem $\csp(\bp^{(i)};\boldsymbol{\xi}_i,\boldsymbol{\beta}_i,\calt_i)$ in problem~\eqref{p:csp}, and obtain a $\sigma$-approximate solution $\Tilde{\by}_i$, and the corresponding cost $\Tilde{c}_i$} that satisfies $\Tilde{c}_i \le \sigma c_i^*$; 
\IF{$v_i \ge \Tilde{c}_i/\sigma$}\label{alg:ac}
\STATE admit slice $i$ ($\bar{x}_i = 1$) and allocate resource $\bar{\by}_i = \Tilde{\by}_i$;
\ELSE
\STATE reject slice $i$ ($\bar{x}_i = 0$) and allocate resource $\bar{\by}_i = \boldsymbol{0}$;
\ENDIF
\STATE update the utilization $w_{m,t}^{(i)} = w_{m,t}^{(i-1)} + \bar{y}_{i,m}^t, \forall t, m$;
\STATE update the price $p_{m,t}^{(i+1)} = \phi_{m,t}(w_{m,t}^{(i)}), \forall t, m$;
\ENDWHILE
\STATE \textbf{Output:} admission and resource allocation $\{\bar{x}_i, \bar{\by}_i\}_{i\in\cals}$.
\end{algorithmic}
\label{alg:ota}
\end{algorithm}

\begin{ass}\label{ass1}
For each slice $i\in\cals$, its value $v_i$ and any feasible resource allocation $\by_i \in \caly_i(1)$ satisfy conditions:

(i) value density of each slice is bounded, i.e., 
\begin{align} \label{eq:ass-1}
    \frac{v_i}{T_i \sum_{m\in\calm}y_{i,m}^t} \in [L,U], \forall t\in \calt_i,
\end{align}
~~~~~~~where $T_i : = |\calt_i|$ is the length of the stay duration.

(ii) variation of resource allocation is upper bounded, i.e.,
\begin{align}\label{eq:ass-2}
    \frac{\sum_{m\in\calm}y_{i,m}^t}{\min_{m\in\calm: y_{i,m}^t>0}y_{i,m}^t} \le V, \forall t \in\calt_i.
\end{align}

(iii) resource allocation is small compared to capacity, i.e.,
\begin{align}\label{eq:ass-3}
    \max_{i\in\cals,t\in\calt_i} y_{i,m}^t \ll C_m, \forall m\in\calm.
\end{align}
\end{ass}
Condition~\eqref{eq:ass-1} requires that the value of each slice is proportional to its total resource consumption in each slot and stay duration, and the factor of the proportionality is uncertain but within bound $[L, U]$. Condition~\eqref{eq:ass-2} means that the resource allocation over different resources is relatively balanced, and the maximum variation is $V$. Condition~\eqref{eq:ass-3} assumes that the consumption of one slice for one resource is small compared to the capacity, which is reasonable in practice.    
We assume the parameters $L, U, V$ and the maximum stay duration $K = \max_{i\in\cals} T_i$, can be set by the InP beforehand. 

\begin{thm}\label{thm:osac}
Under Assumption~\ref{ass1}, given a $\sigma$-approximate algorithm for \emph{\csp},
\emph{\opa} is $\frac{(\sigma + 1)\alpha}{2}$-competitive for \emph{\osac} when the pricing function $\phi:=\{\phi_{m,t}\}_{m\in\calm, t\in\calt}$ is given by, $\forall m\in\calm, t\in\calt$,
\begin{align}\label{eq:threshold}
\phi_{m,t}(w) = L\left[\exp\left(\frac{\alpha w}{2 C_m} \right) - 1\right], w\in[0,C_m],
\end{align}
where $\alpha = 2\ln(\sigma UVK/L + 1)$.
\end{thm}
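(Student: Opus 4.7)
The plan is to apply the online primal-dual framework. First I would relax problem~\eqref{p:sac} into a packing LP by enumerating, for each slice $i$, the set $\caly_i(1)$ of feasible allocations with variables $x_i^k \in [0,1]$. The LP dual carries capacity multipliers $\mu_{m,t} \ge 0$ for each $(m,t)$ and slice multipliers $\lambda_i \ge 0$ for each $i$, subject to $\sum_{m,t} y_{i,m,k}^t \mu_{m,t} + \lambda_i \ge v_i$ for every $(i,k)$; by weak duality, any feasible dual upper-bounds $\opt$.

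Second, I would construct $(\mu,\lambda)$ from \opa's execution. Setting $\mu_{m,t} = \phi_{m,t}(w_{m,t}^{(S)})$ (the terminal price) and $\lambda_i = v_i$ for admitted slices (else $0$), dual feasibility is immediate for admitted slices. For rejected slices, the admission rule $v_i < \tilde{c}_i/\sigma$ together with the $\sigma$-approximation guarantee $\tilde{c}_i \le \sigma c_i^*$ forces $v_i < c_i^*$ at the time $i$ is processed; since $\phi_{m,t}$ is monotone non-decreasing in $w$ and $w$ only grows over time, the cost of every feasible $\by_i$ at terminal prices still exceeds $v_i$, validating $\lambda_i = 0$.

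The core technical step is bounding $\Phi(\bw^{(S)}) := \sum_{m,t} C_m \phi_{m,t}(w_{m,t}^{(S)})$ by $\gamma \, \alg$, where $\gamma := (\sigma+1)\alpha/2$. For each admitted slice $i$, I would exploit the ODE relation $\phi_{m,t}'(w) = (\alpha/(2C_m))(\phi_{m,t}(w) + L)$ and use Condition~\eqref{eq:ass-3} (small per-slice consumption) to pass to the continuous-step estimate
\begin{align*}
\Phi(\bw^{(i)}) - \Phi(\bw^{(i-1)}) \le \tfrac{\alpha}{2}\Bigl[\tilde{c}_i + L\sum\nolimits_{m,t}\tilde{y}_{i,m}^t\Bigr].
\end{align*}
Admission implies $\tilde{c}_i \le \sigma v_i$, while summing the lower bound in~\eqref{eq:ass-1} over $\calt_i$ gives $\sum_{m,t}\tilde{y}_{i,m}^t \le v_i/L$, so each increment is at most $\gamma v_i$; telescoping yields $\Phi(\bw^{(S)}) \le \gamma \, \alg$. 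Combined with a suitable rescaling of $\mu$ by $\gamma$ (and redistribution of the slack into $\lambda$) converts the dual objective into the stated $\gamma$-competitive bound via weak duality.

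The main obstacle is simultaneously establishing the no-overflow property and calibrating $\alpha$. If utilization ever reached $w_{m,t} = C_m$, the price would equal $\phi_{m,t}(C_m) = L(e^{\alpha/2} - 1) = \sigma UVK$; Conditions~\eqref{eq:ass-1}--\eqref{eq:ass-2} together with $T_i \le K$ give $v_i \le UVK\,y_{i,m}^t$ for any slice $i$ requiring resource $(m,t)$, so the minimum CSP cost would exceed $\sigma UVK\,y_{i,m}^t \ge \sigma v_i$, automatically triggering rejection. Condition~\eqref{eq:ass-3} then prevents $w$ from leaping past $C_m$ between consecutive admissions. The choice $\alpha = 2\ln(\sigma UVK/L + 1)$ precisely calibrates the exponential so the rejection threshold coincides with $C_m$, matching the per-slice potential increment and closing both the feasibility and optimality halves of the analysis in one stroke.
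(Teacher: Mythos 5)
Your overall architecture is the same as the paper's: online primal--dual with the exponential pricing function, the no-overflow argument via $\phi_{m,t}(C_m)=\sigma UVK$ together with conditions~\eqref{eq:ass-1}--\eqref{eq:ass-2}, and a telescoping potential bound whose per-slice increment is controlled by $\tilde{c}_i\le\sigma v_i$ and $L\sum_{t,m}\tilde{y}_{i,m}^t\le v_i$ (with the same first-order linearization of the exponential justified by condition~\eqref{eq:ass-3}). Those parts are fine and match the paper.

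The gap is in the final dual assembly. In your configuration-LP dual the objective is $\sum_{m,t}C_m\mu_{m,t}+\sum_i\lambda_i$, and with $\mu_{m,t}=\phi_{m,t}(w^{(S)}_{m,t})$ and $\lambda_i=v_i$ for admitted slices you only get $\opt\le\Phi(\bw^{(S)})+\alg\le\bigl(\tfrac{(\sigma+1)\alpha}{2}+1\bigr)\alg$, i.e., an additive $+1$ worse than the stated ratio. Your proposed repair --- ``rescaling $\mu$ by $\gamma$ and redistributing the slack into $\lambda$'' --- does not work as described: scaling $\mu$ down destroys dual feasibility for rejected slices, which is exactly where the full terminal prices are needed (there you correctly use $v_i<\tilde{c}_i/\sigma\le c_i^*$ plus monotonicity). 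The correct fix, and what the paper's Lagrangian-dual bookkeeping does implicitly, is to set $\lambda_i=\bigl[v_i-\min_{\by_i\in\caly_i(1)}\sum_{t,m}y_{i,m}^t\mu_{m,t}\bigr]^+$; for an admitted slice this is at most $v_i-\tfrac{1}{\sigma}\tilde{c}_i$ (monotonicity of prices plus $\sigma$-approximation), so the per-slice contribution becomes $\bigl(\tfrac{\alpha}{2}-\tfrac{1}{\sigma}\bigr)\tilde{c}_i+\tfrac{\alpha}{2}L\sum_{t,m}\tilde{y}_{i,m}^t+v_i\le\tfrac{(\sigma+1)\alpha}{2}v_i$, the credit $-\tilde{c}_i/\sigma$ cancelling the spurious $+v_i$ and recovering exactly the claimed constant. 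With that one-line change your proof coincides with the paper's; as written, it proves a slightly weaker ratio than the theorem states.
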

In the special case that \csp can be solved optimally (i.e., $\sigma = 1$), \opa will attain a competitive ratio of $2\ln(UVK/L+1)$. 
{Although it is challenging to derive the theoretical approximation ratio $\sigma$ for \csp, we show an approach to empirically estimate $\sigma$ in Section~\ref{sec:sigma}.
}

\begin{proof}[Proof of Theorem~\ref{thm:osac}]
We analyze the competitive performance of \opa based on the online primal-dual analysis approach~\cite{buchbinder2009design}.
By partially relaxing the capacity constraint~\eqref{eq:sac-capacity} using the dual variable $\boldsymbol{\lambda}:=\{\lambda_{m,t}\}_{m\in\calm,t\in\calt}$, the dual of problem~\eqref{p:sac} is
\begin{align*}
\min_{\boldsymbol{\lambda} \ge 0}\max_{\substack{x_i \in \{0,1\},\\ \by_{i}\in\caly_{i}(x_i)}} \quad \sum_{i\in\cals} v_i x_i + \sum_{t\in\calt}\sum_{m\in\calm} \lambda_{m,t} [C_m -  \sum_{i\in\cals} y_{i,m}^t].
\end{align*}
Equivalently, the dual problem can be presented as:
\begin{align*}
\min_{\boldsymbol{\lambda}\ge 0} \quad&  \sum_{t\in\calt}\sum_{m\in\calm} \lambda_{m,t} C_m + \sum_{i\in\cals} \hat{x}_i\left[v_i - \sum_{t\in\calt}\sum_{m\in\calm} \hat{y}_{i,m}^t \lambda_{m,t} \right], 
\end{align*}
where for a given $\boldsymbol{\lambda}$, $\hat{\by}_i:= \hat{\by}_i(\boldsymbol{\lambda}) = \{\hat{y}_{i,m}\}_{m\in\calm, t\in\calt}$ is the optimal solution of 
$$\min_{\by_i \in \caly_i(1)} \sum\nolimits_{t\in\calt_i}\sum\nolimits_{m\in\calm} \lambda_{m,t} y_{i,m}^t,$$
and $\hat{\bx}:=\{\hat{x}_i\}_{i\in\cals}$ is the optimal solution of 
$$\max_{x_i\in\{0,1\}} \sum_{i\in\cals} {x}_i\left[v_i - \sum_{t\in\calt}\sum_{m\in\calm} \hat{y}_{i,m}^t \lambda_{m,t} \right].$$ 
Thus, $\hat{\by}_i$ is the optimal solution of $\csp(\boldsymbol{\lambda};\boldsymbol{\xi}_i,\boldsymbol{\beta}_i,\calt_i)$, and
$\hat{x}_i = 1$ if $v_i \ge \sum_{t\in\calt}\sum_{m\in\calm} \hat{y}_{i,m}^t \lambda_{m,t}$ and $\hat{x}_i = 0$ otherwise.

The high-level idea of online primal-dual analysis is to construct a feasible dual solution of problem~\eqref{p:sac} based on the solution from the online algorithm \opa. Let $\texttt{Dual}(\cali)$ denote the dual objective evaluated at the feasible solution, then \opa is $c$-competitive if we can show that the following inequality holds: 
\begin{align}\label{eq:opd}
    c \cdot \alg(\cali) \ge  \texttt{Dual}(\cali) \ge  \opt(\cali). 
\end{align}

First, we show the second inequality, which holds based on weak duality when the solution $\{\bar{x}_i, \bar{\by}_i\}_{i\in\cals}$ of \opa is primal feasible and the constructed dual solution is dual feasible.
We note that the online decision of \opa satisfies constraint~\eqref{eq:sac-qos} directly. 
Therefore, we just need to show that no resource capacity constraints can be violated by the online decision of \opa. To see this, suppose resource $m'$ reaches the capacity, then for any follow-up slice that uses $m'$, the scaled estimated cost of admitting this slice is at least
\begin{align*}
\Tilde{c}_i/\sigma &\ge 
   \tilde{y}_{i,m'}^t \phi_{m',t}(C_{m'})/\sigma\\ &= \tilde{y}_{i,m'}^t VUK
   \ge \sum\nolimits_{m\in\calm}y_{i,m}^t U T_i \ge v_i, 
\end{align*}
where the first and the second inequalities hold due to conditions~\eqref{eq:ass-1} and~\eqref{eq:ass-2} in Assumption~\ref{ass1}.
Therefore, the follow-up slice will not be admitted by \opa and no capacity constraints will be violated.

Given the online decision of \opa, we construct a solution of the dual problem as: 
\begin{align}
    \bar{\lambda}_{m,t} = \phi_{m,t}(w^{(S)}_{m,t}), \forall m\in\calm, t\in\calt,
\end{align}
where $w^{(S)}_{m,t} = \sum_{i\in\cals} \bar{y}_{i,m}^t$ is the final utilization of resource $m$ at time $t$ when running \opa.
Clearly, $\bar{\lambda}_{m,t} \ge 0, \forall m\in\calm, t\in\calt$ and thus the dual solution is feasible.

Next, we show the first inequality in~\eqref{eq:opd}. Let $P_i$ and $D_i$ denote the primal and dual objective after processing the $i$-th slice using \opa.
The increment of the primal objective is 
\begin{align*}
    P_i - P_{i-1} = v_i \bar{x}_i,
\end{align*}
and the increment of the dual objective is 
\begin{align*}
    D_i - D_{i-1} &= \sum_{t\in\calt}\sum_{m\in\calm} [\phi_{m,t}(w^{(i)}_{m,t}) - \phi_{m,t}(w^{(i-1)}_{m,t})]C_m   \\
    &\quad\quad+\hat{x}_i [v_i - \sum_{t\in\calt}\sum_{m\in\calm} \hat{y}_{i,m}^t \phi_{m,t}(w^{(S)}_{m,t}) ].
\end{align*}

To relate the dual increment $D_i - D_{i-1}$ with online decision $\bar{x}_i$ and $ \bar{\by}_i$ of \opa, note that
\begin{subequations}
\begin{align}
  \hat{x}_i[v_i - \sum_{t\in\calt}&\sum_{m\in\calm} \hat{y}_{i,m}^t \phi(w^{(S)}_{m,t}) ] \\
  &\le \hat{x}_i [v_i - \sum_{t\in\calt}\sum_{m\in\calm} \hat{y}_{i,m}^t \phi(w^{(i-1)}_{m,t}) ]  \\
  &\le \hat{x}_i [v_i - \frac{1}{\sigma}\sum_{t\in\calt}\sum_{m\in\calm} \bar{y}_{i,m}^t \phi(w^{(i-1)}_{m,t}) ]\\
  &\le \bar{x}_i [v_i - \frac{1}{\sigma}\sum_{t\in\calt}\sum_{m\in\calm} \bar{y}_{i,m}^t \phi(w^{(i-1)}_{m,t}) ],
\end{align}    
\end{subequations}
where the first inequality holds since $\phi_{m,t}(\cdot)$ is a non-decreasing function.
The second inequality holds because $\hat{\by}_i$ and $\bar{\by}_i$ are the feasible solution and $\sigma$-approximate solution of $\csp(\{\bp^{(i)};\boldsymbol{\xi}_i,\boldsymbol{\beta}_i,\calt_i)$, respectively. Thus, $\sum_{t\in\calt}\sum_{m\in\calm} \hat{y}_{i,m}^t \phi(w^{(i-1)}_{m,t}) \ge c_i^* \ge \frac{1}{\sigma}\sum_{t\in\calt}\sum_{m\in\calm} \bar{y}_{i,m}^t \phi(w^{(i-1)}_{m,t})$. 
The last inequality holds because $\hat{x}_i = 1$ must give $\bar{x}_i = 1$.
Next we can consider the following two cases.

\noindent\textbf{Case I.} When $\bar{x}_i = 0$, we have $\bar{y}_{i,m}^t = 0, \forall m\in\calm, t\in\calt$,
and thus
\begin{align*}
D_i - D_{i-1} \le 0 = P_{i} - P_{i-1}.    
\end{align*}

\noindent\textbf{Case II.} When $\bar{x}_i = 1$, we have $\bar{\by}_i = \tilde{\by}_i$, and
$v_i \ge \frac{1}{\sigma}\sum_{t\in\calt_i}\sum_{m\in\calm} \bar{y}_{i,m}^t \phi_{m,t}(w^{(i-1)}_{m,t})$. Then we have
\begin{subequations}
\begin{align}
    D_i - D_{i-1} &\le \sum_{t\in\calt}\sum_{m\in\calm} [\phi_{m,t}(w^{(i)}_{m,t}) - \phi_{m,t}(w^{(i-1)}_{m,t})]C_m \nonumber\\
    &\quad\quad+  v_i - \frac{1}{\sigma}\sum_{t\in\calt}\sum_{m\in\calm} \bar{y}_{i,m}^t \phi_{m,t}(w^{(i-1)}_{m,t}) \\
    &\approx (\frac{\alpha}{2} - \frac{1}{\sigma})\sum_{t\in\calt_i}\sum_{m\in\calm}\bar{y}_{i,m}^t\phi_{m,t}(w_{m,t}^{(i-1)}) \nonumber\\
    \label{eq:ineq1}
    &\quad\quad\quad + \frac{\alpha}{2} \sum_{t\in\calt_i}\sum_{m\in\calm}\bar{y}_{i,m}^t L + v_i \\
    \label{eq:ineq2}
    &\le \frac{(\sigma+1)\alpha}{2} v_i  = \frac{(\sigma+1)\alpha}{2} (P_i - P_{i-1}).
\end{align}    
\end{subequations}
The equality~\eqref{eq:ineq1} holds since
\begin{subequations}
\begin{align*}
&\sum_{t\in\calt}\sum_{m\in\calm} [\phi_{m,t}(w^{(i)}_{m,t}) - \phi_{m,t}(w^{(i-1)}_{m,t})]C_m \\
&= C_m\sum_{t\in\calt_i}\sum_{m\in\calm} L \exp(\frac{\alpha w_{m,t}^{(i-1)}}{2 C_m}) \cdot [\exp(\frac{\alpha \bar{y}_{i,m}^t}{2 C_m} ) - 1] \\
&\approx C_m\sum_{t\in\calt_i}\sum_{m\in\calm} L \exp(\frac{\alpha w_{m,t}^{(i-1)}}{2 C_m} ) \cdot \frac{\alpha \bar{y}_{i,m}^t}{2 C_m} \\
& = \frac{\alpha}{2} \sum_{t\in\calt_i}\sum_{m\in\calm}\bar{y}_{i,m}^t[L \exp(\frac{\alpha w_{m,t}^{(i-1)}}{2 C_m} ) - L ] \nonumber\\
&\quad\quad\quad+  \frac{\alpha}{2}\sum_{t\in\calt_i}\sum_{m\in\calm}\bar{y}_{i,m}^t L,\\
&= \frac{\alpha}{2} \sum_{t\in\calt_i}\sum_{m\in\calm}\bar{y}_{i,m}^t\phi_{m,t}(w_{m,t}^{(i-1)}) +  \frac{\alpha}{2} \sum_{t\in\calt_i}\sum_{m\in\calm}\bar{y}_{i,m}^t L. \nonumber
\end{align*}
\end{subequations}
The inequality~\eqref{eq:ineq2} holds because: (i) $v_i \ge \frac{1}{\sigma}\sum_{t\in\calt_i}\sum_{m\in\calm} \bar{y}_{i,m}^t \phi(w^{(i-1)}_{m,t})$ from the decision rule in the online algorithm, and (ii) $\sum_{t\in\calt_i}\sum_{m\in\calm}  \bar{y}_{i,m}^t L \le \sum_{t\in\calt_i} \frac{v_i}{T_i} \le v_i$ from condition~\eqref{eq:ass-1} in Assumption~\ref{ass1}.

Thus, we have
\begin{align*}
    \texttt{Dual}(\cali) = D_S &= \sum\nolimits_{i\in\cals} [D_i - D_{i-1}] \\
    &\le \sum\nolimits_{i\in\cals} \frac{(\sigma+1)\alpha}{2} [P_i - P_{i-1}] \\
    &= \frac{(\sigma+1)\alpha}{2} P_S = \frac{(\sigma+1)\alpha}{2} \alg(\cali),
\end{align*}
which completes the proof.
\end{proof}

\subsection{Resource Allocation Algorithm} \label{MicroOpt}

\begin{figure*}[ht!]
\subfloat[Slice model]{  
\includegraphics[width=0.5\linewidth,trim={3.25em 0 2.5em 0},clip]{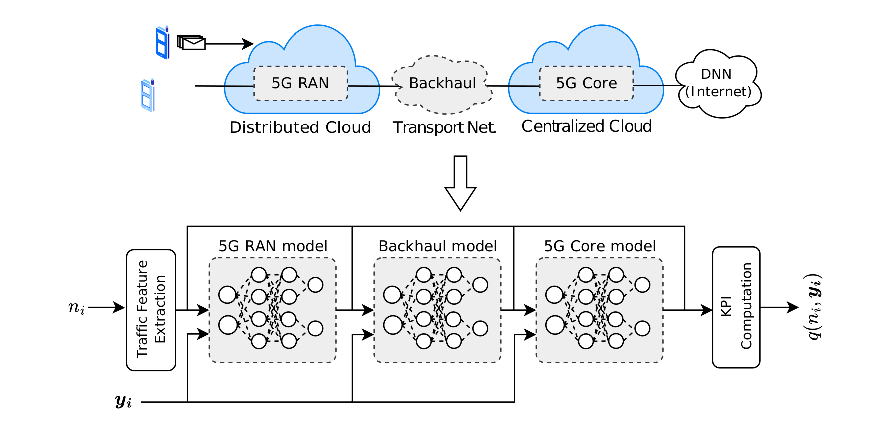}
    \label{fig:slice_model}}
    \subfloat[VNF model]{  
\includegraphics[width=0.5\linewidth,trim={4em 0 6em 0},clip]{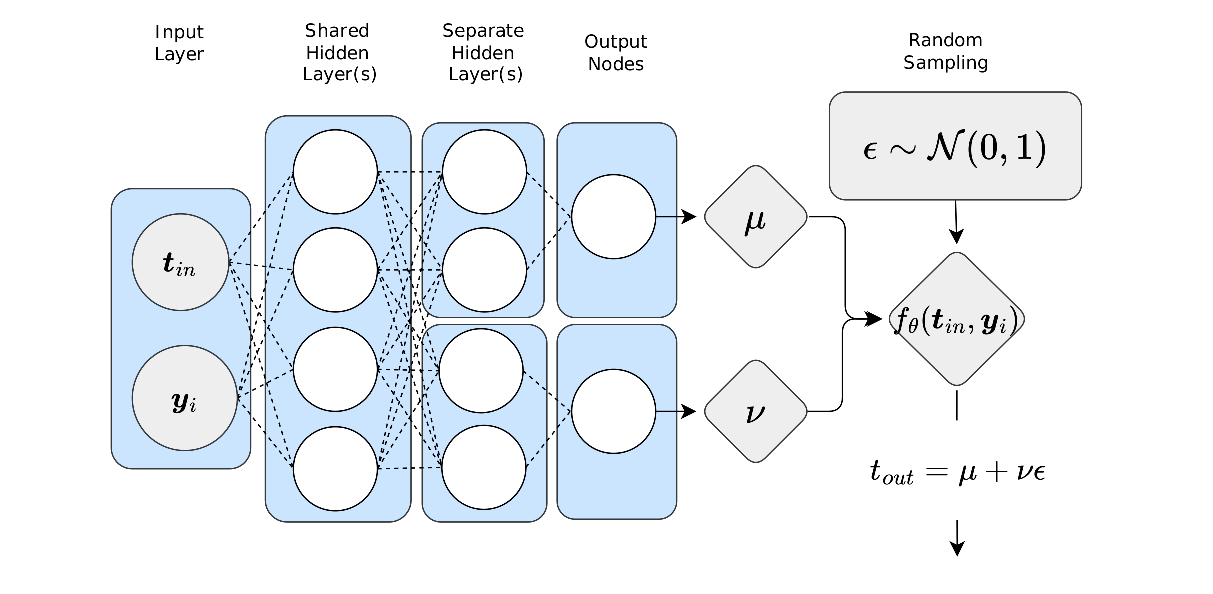}
    \label{fig:vnf_model}}
    \caption{Slice and VNF model architectures}
\end{figure*}

Given the definition of QoS degradation in \eqref{beta} and the unknown distribution of $q_{i}(n_i, \by_{i})$, \csp in \eqref{p:csp} pertains to a black-box continuous optimization problem. We solve this problem in two steps. In the first step, we use a data-driven framework that leverages the expressive power of neural networks to learn a \textit{slice model}. Given slice traffic $n_i$ and its resource allocation $\by_{i}$, the model predicts the QoS distribution $q_{i}(n_i, \by_{i})$, which can be used to compute the QoS degradation given in \eqref{beta}. In the second step, we tackle the \textit{constrained optimization} in \eqref{p:csp} by using a primal-dual optimization algorithm that capitalizes on the differentiability of the slice model.

\subsubsection{Slice Model} \label{sec:slice_model}

Slice modeling encompasses acquisition of the function $q_{i}(n_i, \by_{i})$ that captures the relationship among resource allocation, slice traffic, and QoS distribution. The QoS sampled from this distribution can be used to calculate the QoS degradation using~\eqref{beta}. Finally, the estimated QoS degradation is used for solving the constrained optimization problem in~\eqref{p:csp}. 

\fig{fig:slice_model} provides an overview of the proposed slice modeling framework. We start by modeling each VNF individually. The input and output of the VNF models is the traffic feature vector comprising flow-level traffic features, such as the mean and std. of the packet rate, packet size and inter-arrival time. Once trained, these individual models are then connected by passing the output of an upstream VNF as the input to the downstream VNF, creating an E2E model of a network slice. Finally, the distribution for the desired QoS metric can be computed using the egress traffic feature distribution. For instance, mean packet delay for the slice can be computed by adding the mean packet delay at each VNF. Similarly, throughput and jitter can be computed using egress packet rate and inter-arrival time, respectively. 

For the first VNF, the input feature vector can be computed using either publicly available traffic models \cite{etsi_traffic_profiles} or using private or public datasets. This network modeling approach assumes that the InP only offers pre-defined slice types to the SPs with known network models. However, the slices' attributes such as the service period $(\calt_i)$, peak traffic distribution $(\Lambda_i)$, QoS threshold $(q^i_{\textit{thresh}})$ and QoS degradation threshold $(\beta_i)$ can be customized by the SP.

\subsubsection{VNF Model}

In this section, we focus on the individual VNF models. As discussed in Section \ref{sec:bg_related}, different DNN-based architectures have been successfully demonstrated to model network behavior \cite{yang2022deepqueuenet, ferriol2023routenet}. We also use a DNN model to learn the input to output traffic feature vector relationship by employing a dataset that encompasses various resource allocations ($\by_{i}$) and input/output traffic feature vectors for each VNF. Unlike mathematical models (e.g., queuing models), a DNN-based model can easily handle heterogeneous types of resources and predict complex traffic features. Additionally, the complexity of this approach does not depend on the traffic volume, which is the case with packet-level simulators. 

We assume that the traffic feature vectors follows a normal distribution for the remainder of this paper, and the VNF model is designed to predict the parameters of this distribution. We choose the normal distribution as it proves to be sufficient for effectively modeling the data in our case (cf. numerical validation in 
Section~\ref{sec:exp-net-model}). However, it is important to note that the proposed VNF model can be extended to incorporate mixture density networks (MDNs), which have the ability to represent arbitrarily complex distributions \cite{mdn}. The architecture of the VNF model is shown in \fig{fig:vnf_model}. Let $f_{\theta}(\cdot)$ represent the function learned by the VNF model, parameterized by the neural network weights $\theta$. The inputs to the VNF model consist of the traffic feature vector ${\bt}_{\textit{in}}$ and resource allocation $\by_{i}$. These inputs are connected to a set of shared hidden layers, followed by separate hidden layers dedicated to each Gaussian distribution parameter. As a result, the model outputs the Gaussian distribution parameters $\mu$ and $\nu$ associated with the predicted output traffic feature vector ${\bt}_{\textit{out}}$. Under this distribution, the probability density function of $f_{\theta}({\bt}_{\textit{in}}, \by_{i})$ over the output traffic features ${\bt}_{\textit{out}}$ can be written as:
\begin{equation}
\begin{aligned}
\mathbb{P}({\bt}_{\textit{out}}\, |\, f_{\theta}({\bt}_{\textit{in}}, \by_{i})) = \frac{1}{\sqrt{2\pi}\nu} \exp\left(-\frac{\left({\bt}_{\textit{out}} - \mu\right)^2}{2\nu^2}\right).
\end{aligned}
\end{equation}
Finally, the loss for the model is computed as:
\begin{equation}\label{eq:lqos}
\begin{aligned}
L = -\frac{1}{B} \sum\nolimits_{j=1}^{B} \log \mathbb{P}({\bt}_{\textit{out},j}\, |\, f_{\theta}({\bt}_{\textit{in},j}, \by_{i,j})),
\end{aligned}
\end{equation}
where $B$ is the batch size and the subscript $j$ represents the $j$-th sample in the batch. This loss function calculates the negative log-likelihood of the ground-truth under the predicted normal distribution $\mathcal N(\mu, \nu)$ generated by the model for inputs $({\bt}_{\textit{in},j}, \by_{i,j})$. Once trained, this model can be used to sample the output traffic feature vector from the predicted distribution.

However, the drawback of na{\"i}vely sampling from the predicted distribution is that any subsequent optimization algorithm (\cf Section \ref{sec:opt}) that involves the sampled traffic feature vector would be limited to only numerical gradient calulation which is computationally expensive. To address this, we propose using the reparameterization trick, first introduced in \cite{vae_paper}. This technique, commonly employed in the ML literature, can also be used with other probability distributions, including MDNs \cite{mdn_reparam}. For this purpose, the random sampling is reformulated as follows: 
 \begin{equation}
\begin{aligned} \label{eq:qos_sample}
 t_{out} = \mu + \nu \epsilon,
\end{aligned}
\end{equation}
where $\epsilon$ is a random sample from a standard normal distribution $\mathcal N(0, 1)$ that does not depend on the inputs $({\bt}_{\textit{in},j}, \by_{i})$. The reparameterization trick not only allows for more efficient gradient calculation, it also allows the use of existing automatic differentiation frameworks (e.g., PyTorch \cite{torch}) for easy implementation. Finally, the computed gradients can be used in subsequent optimization to solve the \csp, as described in the following subsection.

 \subsubsection{Constrained Optimization} \label{sec:opt}

Once the slice model has been composed using the VNF models, we leverage gradient-descent along with primal-dual optimization to solve the constrained optimization problem in~\eqref{p:csp}. For this purpose, we start by converting the constrained problem into an unconstrained problem by using dual Lagrangian relaxation. The Lagrangian is defined as follows:
\begin{align}
&\mathcal{L}(\by_i, \lambda_s, \boldsymbol{\mu}) = \sum\nolimits_{t\in\calt_i} \sum\nolimits_{m\in\calm} p_{m,t}^{(i)} y_{i,m}^t +
\\
&\lambda_s \left( F_i(\by_i;\boldsymbol{\xi}_i,\calt_i) - \beta_{i} \right) + \sum\nolimits_{t\in\calt_i}\sum\nolimits_{m\in\calm} \mu_{m,t} \left( y_{i,m}^t - R_m \right),  \nonumber
\end{align}
where $\lambda_{s}$ denotes the Lagrange multiplier for the QoS degradation constraint, and $\boldsymbol{\mu} = \{\mu_{m,t}\}_{m\in\calm,t\in\calt_i}$ denotes the Lagrange multipliers for the resource limit constraint. Based on this formulation, the dual problem can be written as:
\begin{align}
\max_{\lambda_s \ge 0, \boldsymbol{\mu}\ge \boldsymbol{0}}  \min_{\by_i \ge \boldsymbol{0}} \quad \mathcal{L}(\by_i, \lambda_s, \boldsymbol{\mu}).  
\end{align}

The above dual problem can be solved iteratively using primal-dual updates with gradient-based methods \cite{boyd2004convex}, if it is differentiable with respect to both primal and dual variables. This is because gradient-based methods rely on the ability to compute the gradients of the objective function and the constraints with respect to relevant variables. 

Note that the computation of the QoS degradation $F_i(\by_i;\boldsymbol{\xi}_i,\calt_i)$ using \eqref{beta} involves an indicator function $\mathbb{I}_{[q_{i}(n_i, \by_{i}) \leq q^i_{\textit{thresh}}]}$, which is a piecewise constant and has a gradient of zero almost everywhere. This poses a challenge for gradient-based optimization algorithms that rely on gradient calculations for parameter updates \cite{cotter2019two}. 
To address this challenge, we introduce a surrogate QoS degradation function $\hat{F}_i(\by_i;\boldsymbol{\xi}_i,\calt_i)$ that replaces the indicator function in \eqref{beta} with a Sigmoid function $f(\rho*(q_{i}(n_i, \by_{i}) - q^i_{\textit{thresh}}))$, where  $\rho$ is a hyper-parameter that controls the sharpness of the curve. 
The Sigmoid function is a smooth and differentiable function and allows to use gradient-based optimization methods, while still approximating the behavior of the indicator function.

We denote the surrogate Lagrangian function, which incorporates the surrogate QoS degradation function, as 
$\mathcal{\hat{L}}(\by_i, \lambda_s, \boldsymbol{\mu})$. With this surrogate Lagrangian formulation, we can apply analytical gradient optimization techniques to optimize the resource allocation, while the solution's feasibility is ensured using the strict definition of QoS degradation. 

 \begin{algorithm}[t!]
 \caption{Resource Allocation Algorithm (\microopt)}
  \label{algo:res_alloc}
 \begin{algorithmic}[1]
 \renewcommand{\algorithmicrequire}{\textbf{Input:}}
 \renewcommand{\algorithmicensure}{\textbf{Output:}}
 \STATE \textbf{Input:} slice traffic ${\Lambda_i}$, network model $q_{i}(n_i, \by_{i})$, QoS threshold $q^i_{\textit{thresh}}$, QoS degradation threshold $\beta_i$, hyper-parameters $\tau_{1, max}, \tau_{2, max}, \eta_1, \eta_2, \eta_3, \epsilon_1, \epsilon_2$;
\STATE \textbf{Initialization:} $\lambda_s, \boldsymbol{\mu}, \textsc{LB}=0, \textsc{UB}=\infty, \tau_1=0, \tau_2=0$;
  \vspace{3pt}
  \WHILE{ $\frac{\textsc{UB} - \textsc{LB}}{\textsc{UB}} > \epsilon_1$ \OR $\tau_1 < \tau_{1, max}$ }
    \STATE $\by_i \leftarrow$ \texttt{Gridsearch} (${\Lambda_i}, q_{i}(n_i, \by_{i})$);
    \WHILE{ $|\nabla_y \mathcal{\hat{L}}| > \epsilon_2$ \OR $\tau_2 < \tau_{2, max}$ }
        \STATE $\by_i\leftarrow [\by_i - \eta_1 \nabla_y \mathcal{\hat{L}}]^
+$;
        \STATE $\tau_2 \leftarrow \tau_2 + 1$;
    \ENDWHILE
    \STATE $\lambda_s \leftarrow [\lambda_s + \eta_2(F_i(\by_i;\boldsymbol{\xi}_i,\calt_i) - \beta_i)]^+$;
    \STATE $\mu_{m,t} \leftarrow [\mu_{m,t} + \eta_3(y_{i,m}^t - R_m)]^+,\,\forall m, t$;
    \STATE $\textsc{LB} = \max\{\textsc{LB}, \mathcal{L}(\by_i, \lambda_s, \boldsymbol{\mu})\}$;
    \STATE $\textsc{UB} = \min\{\textsc{UB}, \sum\nolimits_{t\in\calt_i} \sum\nolimits_{m\in\calm} p_{m,t}^{(i)} y_{i,m}^t\}$;
    \STATE $\tau_1 \leftarrow \tau_1 + 1$;
  \ENDWHILE
\STATE \textbf{output:} resource allocation $\by_{i}$.
 \end{algorithmic} 
 \end{algorithm}

\textbf{Resource allocation algorithm (\microopt).} 
We propose a primal-dual optimization algorithm for resource allocation in
Algorithm~\ref{algo:res_alloc}. 
We refer to this algorithm as \microopt.
The algorithm takes as input the traffic ${\Lambda_i}$ and the network model $q_{i}(n_i, \by_{i})$. It also requires the QoS threshold $q^i_{\textit{thresh}}$, the QoS degradation threshold $\beta_i$ specific to each SR $i$, and several hyper-parameters to control the algorithm's behavior. These include parameters related to the stopping condition, such as $\tau_{1, max}$ and $\tau_{2, max}$, which define the maximum number of iterations for the outer and inner loops, respectively, and $\epsilon_1$ and $\epsilon_2$ that determine the desired level of convergence for the upper and lower bounds of the objective function. We also have learning rates, $\eta_1$, $\eta_2$, and $\eta_3$, for updating resource allocations and Lagrangian multipliers. Finally, the algorithm's output is the optimal resource allocation for each slice.

The algorithm is comprised of outer and inner loops. Within the inner loop, the resource allocation variables are updated using the gradient of the surrogate Lagrangian function ($\nabla_y \mathcal{\hat{L}}$). These updated variables are then projected into the non-negative domain denoted by the notation $[\cdot]^+$. We use coarse-grained \texttt{Gridsearch} to initialize the resource allocation variables. 
After updating the resource allocation variables, the algorithm updates the Lagrange multipliers inside the outer loop. QoS constraints multipliers, $\lambda_s$, are updated based on the QoS degradation values $F_i(\by_i;\boldsymbol{\xi}_i,\calt_i)$ and threshold $\beta_i$ for each slice. Similarly, resource constraints multipliers, $\mu_{m,t}$, are updated for each resource $m$ and slot $t$ by considering the difference between the allocated resources $y_{i,m}^t$ and the resource limit $R_m$.
At each point, the upper bound $\textsc{UB}$ is equal to the best feasible solution found so far, while the lower bound $\textsc{LB}$ is equal to the value of the Lagrangian function. Once the termination condition is met, the algorithm returns the resource allocation corresponding to the best $\textsc{LB}$.

\subsubsection{Estimating Upper Bound  on Approximation Ratio} \label{sec:sigma}

Since \csp involves the constraints from the slice model and is a non-convex problem, it is challenging to derive the optimal cost $c^*_i$ for each slice $i$ and a theoretical approximation ratio $\sigma$ for the solution provided by \microopt. 
However, in practice, we can utilize the information about SRs arrived during the online operation to continuously estimate an upper bound on the approximation ratio as a practical but sub-optimal solution.  

For each slice $i$, we can assume that the QoS degradation $F_i(\by_i;\boldsymbol{\xi}_i,\calt_i)$ is a monotonically non-increasing function of any resource \(y_{i,m}^t\), given that all other resources \(y_{i,m'}^t\) \((m' \neq m)\) remain fixed. This is because an increase in allocation for any resource will not degrade the slice's QoS.
Given this, we can derive the minimum allocation $y_{i,m}^{t,\min}$ of resource $m$ by setting all other resources $m'$ ($m'\not = m$) to the maximum possible value $R_{m'}$ and finding the minimum $y_{i,m}^{t,\min}$ that satisfies the QoS constraint, i.e., 
\begin{align}
    y_{i,m}^{t,\text{min}} = \min \{ y_{i,m}^t \,|\, F_i(R_1, \dots, y_{i,m}^t, 
    \dots, R_M;\boldsymbol{\xi}_i,\calt_i) \leq \beta\} \nonumber.
\end{align}
We can use a binary search to compute $y_{i,m}^{t,\text{min}}$ efficiently.

Based on the minimum allocation $y_{i,m}^{t,\text{min}}$, we can derive a lower bound for the optimal cost $c_i^*$ as $c^\text{lb}_i = \sum_{t\in\calt_i}\sum_{m \in \mathcal{M}} p_{m} y_{i,m}^{t,\text{min}}$,
and then estimate an upper bound on the approximation ratio of \microopt by
\begin{align*}
    \sigma^\text{ub}_i = \frac{c_i}{c^\text{lb}_i} \ge \frac{c_i}{c_i^*},
\end{align*}
where $c_i$ is the cost of the solution from \microopt. We propose to use the max $\sigma^{\text{ub}} = \max[\sigma^\text{ub}_i]$ of the upper bound over SR distributions to estimate the approximation ratio $\sigma$. In Fig.~\ref{fig:sigma} of Section~\ref{sec:exp-resource-alloation}, we empirically validate that the estimation $\sigma^{\text{ub}}$ is close to the actual approximation ratio.






\section{Implementation}\label{sec:implementation}
In this section, we describe the implementation of our network slicing testbed, shown if \fig{fig:testbed} . We use this testbed to collect our dataset, and then construct a data-driven slice model based on the dataset. The testbed deployment instruction are publicly available in \cite{testbed_repo}.

\subsection{Testbed Infrastructure} \label{sec:testbed}

The testbed is deployed on a three-node Kubernetes cluster. A high-performance physical machine with 32 CPU cores and 32 GB of RAM is dedicated to hosting the RAN. The transport network and core are deployed on Intel NUC PCs, each configured with 8 CPU cores and 16 GB of RAM. All three nodes are connected through a 1~Gbps NETGEAR switch.

\begin{figure}[ht!]
    \centering

    \includegraphics[width=\linewidth]{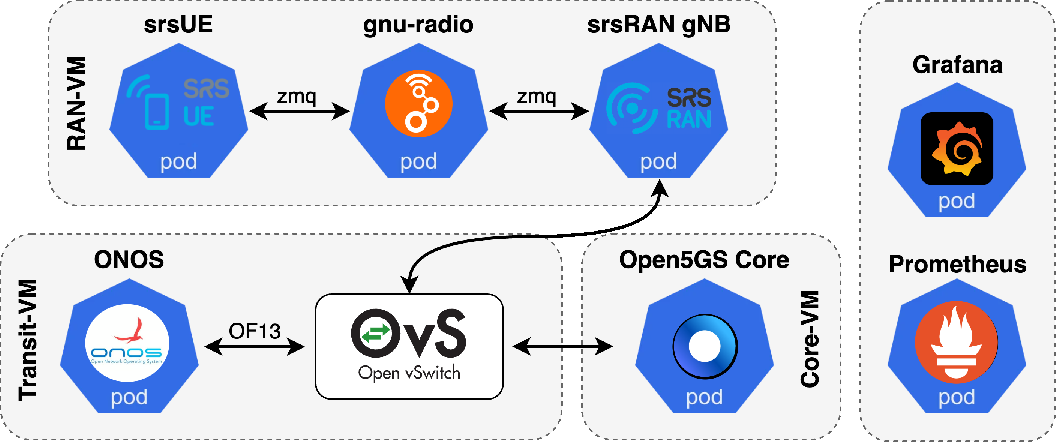}
    \caption{Overview of our 5G testbed}
    \label{fig:testbed}
\end{figure}

\subsection{5G Network Implementation}

\noindent\textbf{RAN.} The 5G RAN is implemented using the srsRAN project \cite{srsran}, an open-source software that provides a 3GPP Release 17 (R17) compliant gNB. User Equipments (UEs) are emulated with srsUE \cite{srsUE}. Virtual radios from srsRAN are utilized instead of physical radios to enable communication between the gNB and UEs. Additionally, GNU Radio Companion is used to handle the uplink and downlink signals.

\smallskip
\noindent \textbf{Core.} The 5G core network is implemented using Open5GS \cite{open5gs}, an open-source 3GPP Release 17 (R17) compliant implementation. Core functions, including the Access and Mobility Management Function (AMF), Session Management Function (SMF), User Plane Function (UPF), and Network Repository Function (NRF), are containerized as Docker containers. Each slice has dedicated SMF and UPF instances, while other functions are shared. For resource allocation in the core, we only on the UPF, as it operates in the data plane.

\smallskip
\noindent \textbf{Transport.} The transport network employs a software-defined VXLAN overlay using OvS \cite{ovs} on the underlying physical network. This allows the traffic between the RAN and the core to be routed through the transport network switch.

\subsection{Management and Control}
\noindent \textbf{MANO.} Kubernetes v1.29 is used for orchestrating and managing the 5G VNFs as lightweight containers. The Kubernetes API facilitates the placement of VNFs across distributed nodes and the creation of network slices with desired topologies. We use Linux cgroups to dynamically adjust the CPU resources for these network functions.

\smallskip
\noindent \textbf{SDN Controller.} The ONOS SDN controller \cite{onos} is used to manage network flow routing within slices. It interfaces with OvS switches in the VXLAN overlay to direct slice traffic through OvS queues at predefined rates, enabling efficient bandwidth slicing.

\smallskip
\subsection{Dataset Collection} \label{sec:dataset} To create the dataset for training VNF and slice models, Poisson-distributed traffic is injected into the UPF, Backhaul transport (OvS), and RAN VNFs. Since open-source per-VNF implementations for the RAN are unavailable, the entire RAN is treated as a single VNF. However, the proposed approach can be extended to per-VNF modeling for the RAN once such implementations become available. Traffic is generated at rates ranging from 1 Mbps to 35 Mbps in 5 Mbps increments. Time synchronization across machines is ensured using NTP \cite{ntpdate}. For each traffic profile, VNF resources are varied (\textit{CPU} for the RAN and Core, and \textit{Bandwidth} for the Backhaul transport), and the resulting output traffic is recorded as PCAP files. Data is collected over 60 seconds for each combination of traffic and resource allocation. The captured PCAP files are then pre-processed to extract flow-level feature vectors, which serve as input and output for the VNF and slice models. The dataset is available at \cite{dataset}, along with a simple example in \cite{workshop} that demonstrates how to construct an E2E slice model from VNF models and perform resource optimization.
\section{Experimental Results}\label{sec:results}
We provide three sets of experiments in this section. First, in Section~\ref{sec:exp-net-model}, we evaluate the DNN-based slice model built using real-traces from our testbed. 
Second, we illustrate the performance of our gradient-based resource allocation algorithm (i.e., Algorithm~\ref{algo:res_alloc}) in Section~\ref{sec:exp-resource-alloation}. Finally, we demonstrate the performance of admission control and resource allocation algorithm (i.e., Algorithm~\ref{alg:ota}) in Section~\ref{sec:exp-ac}.  We only evaluate the aspects of the slice model, and the resource allocation algorithms that are relevant to the problem of online slice admission control. For detailed evaluation and comparison with the state-of-the-art, we refer to our previous works \cite{vnetrunner, microopt}. 

\subsection{Slice Model}
\label{sec:exp-net-model}

\begin{figure*}[ht!]
\captionsetup{justification=centering}
\centering
\begin{minipage}{0.75\textwidth}
    \centering
    \subfloat[RAN loss curve]{\includegraphics[width=0.32\linewidth]{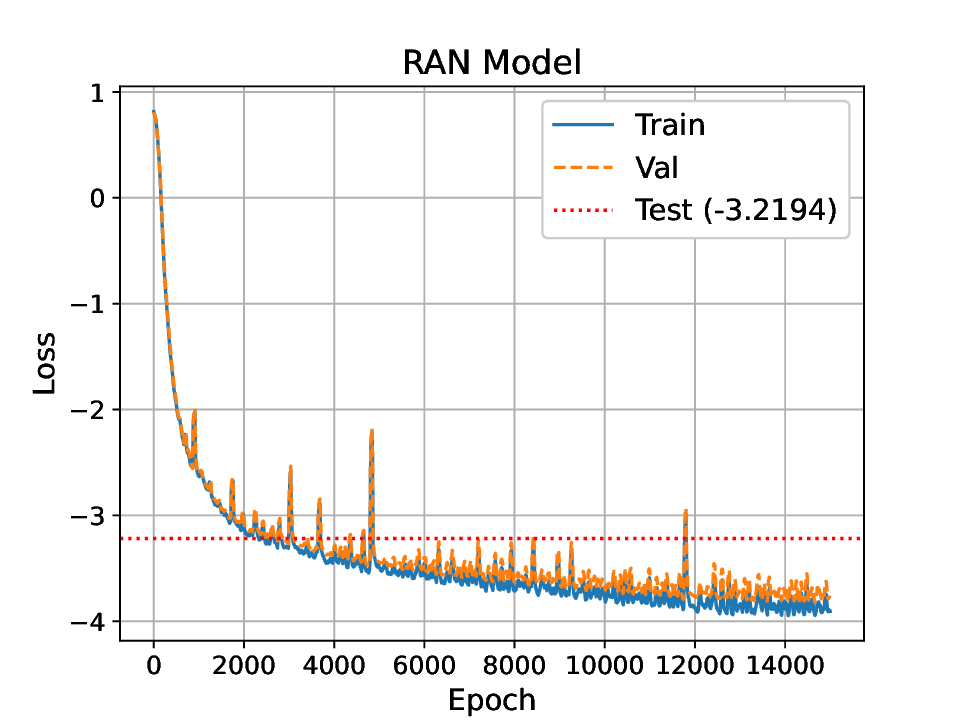}\label{fig:ran_loss}}%
    \hfill
    \subfloat[Backhaul (OvS) loss curve]{\includegraphics[width=0.32\linewidth]{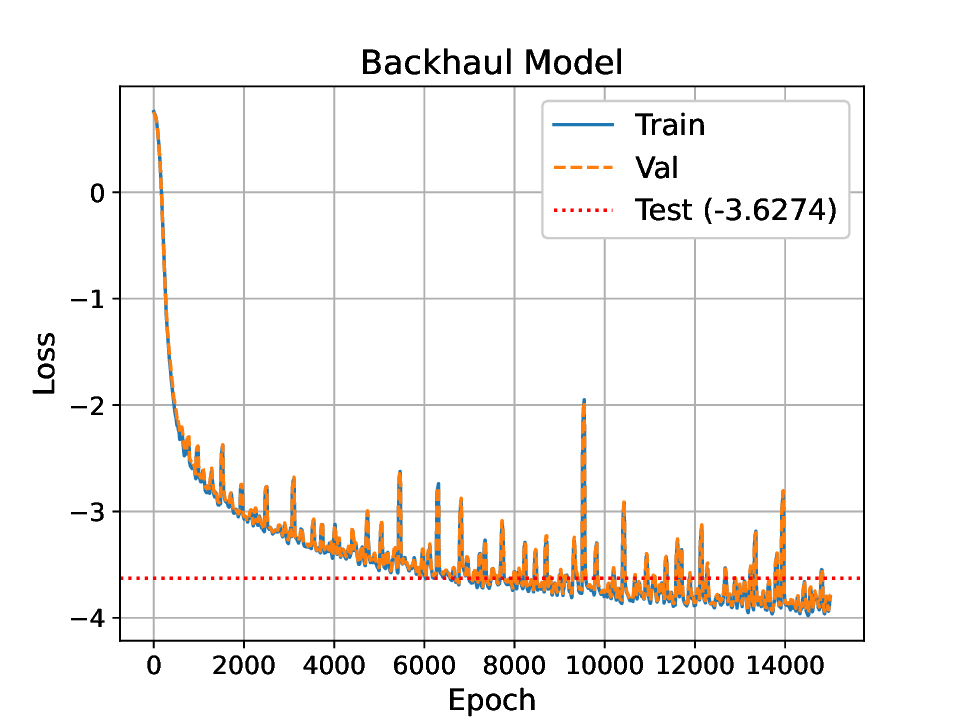}\label{fig:ovs_loss}}%
    \hfill
    \subfloat[Core (UPF) loss curve]{\includegraphics[width=0.32\linewidth]{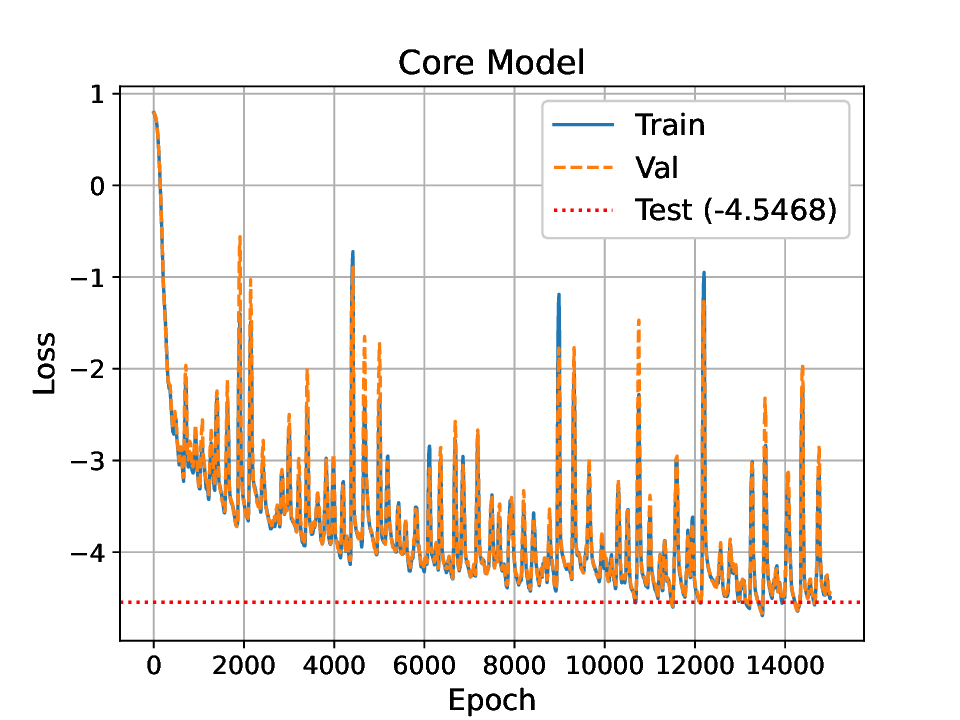}\label{fig:upf_loss}}%
    \caption{VNF model loss curves 
    }
    \label{fig:vnf_loss_curves}
\end{minipage}%
\begin{minipage}{0.24\textwidth}
\vspace{0.6cm}
    \centering
    \includegraphics[width=\linewidth]{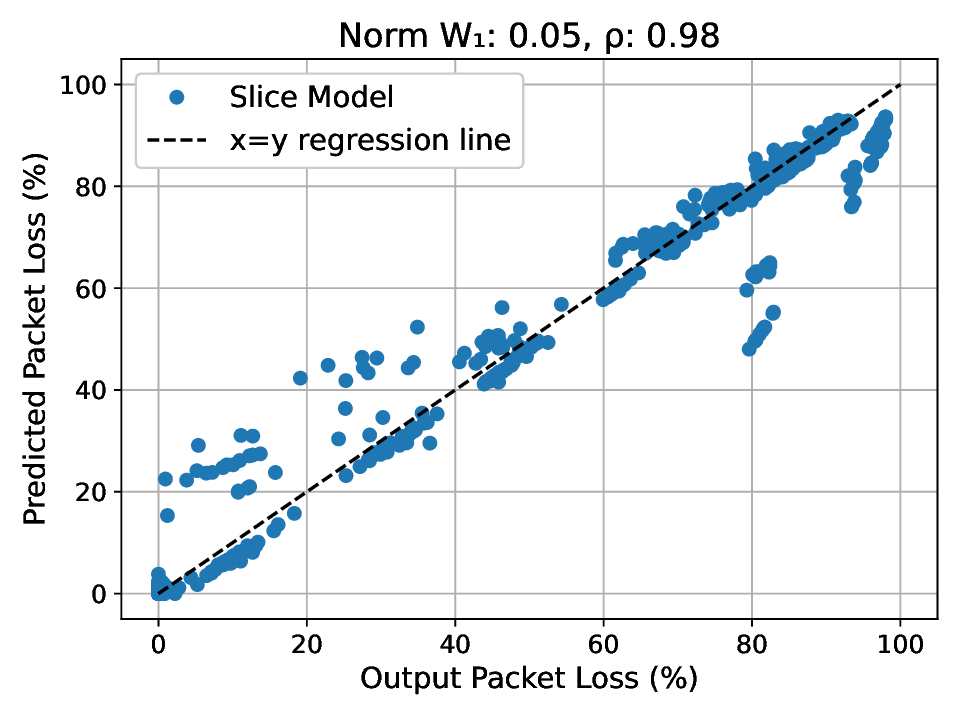}
    \caption{Slice model regression plot 
    }
    \label{fig:reg_plot}
\end{minipage}
\end{figure*}

\begin{figure*}[ht!]
\captionsetup{justification=centering}

\subfloat[Mean resource allocation]{
    \includegraphics[width=0.24\linewidth]{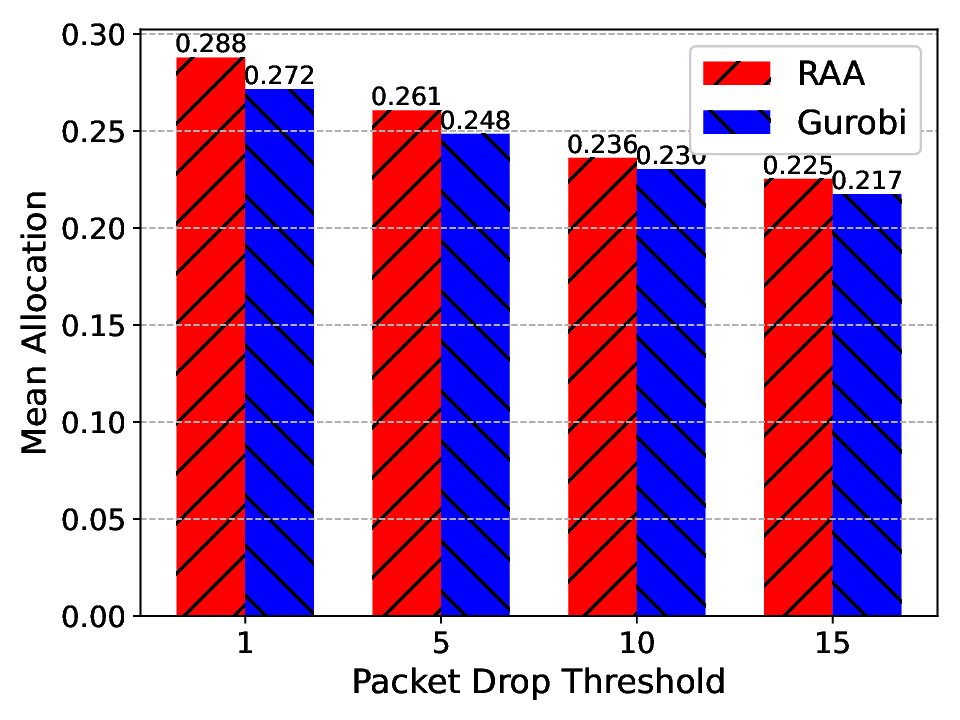}
    \label{fig:mean_res_alloc}
}
\subfloat[Mean runtime]{
    \includegraphics[width=0.24\linewidth]{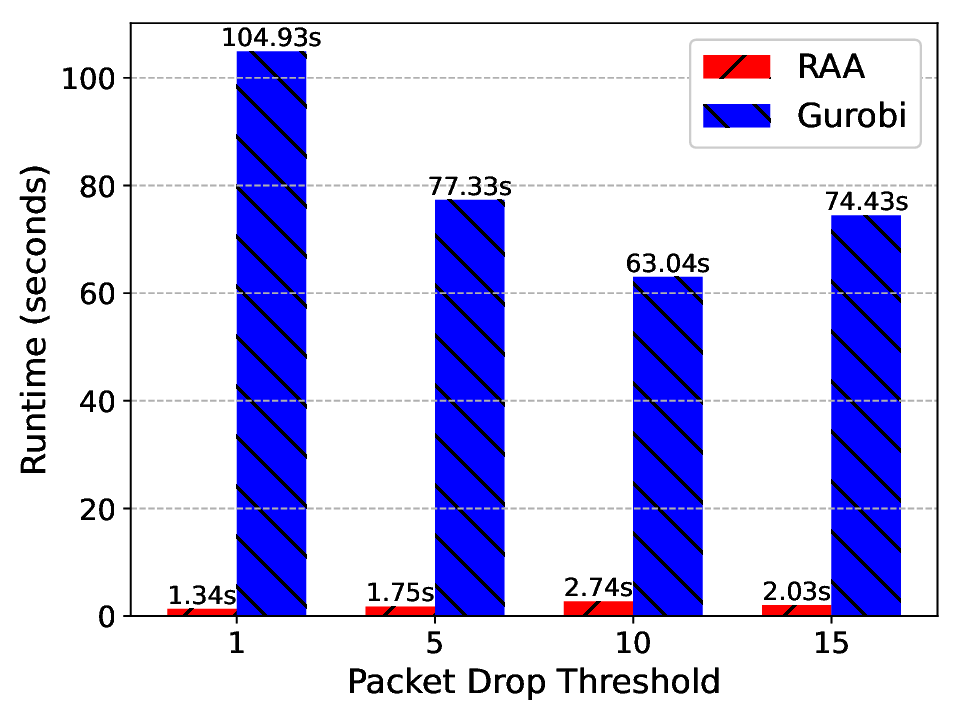}
    \label{fig:mean_runtime}
}
\subfloat[Mean runtime across models]{
    \includegraphics[width=0.24\linewidth]{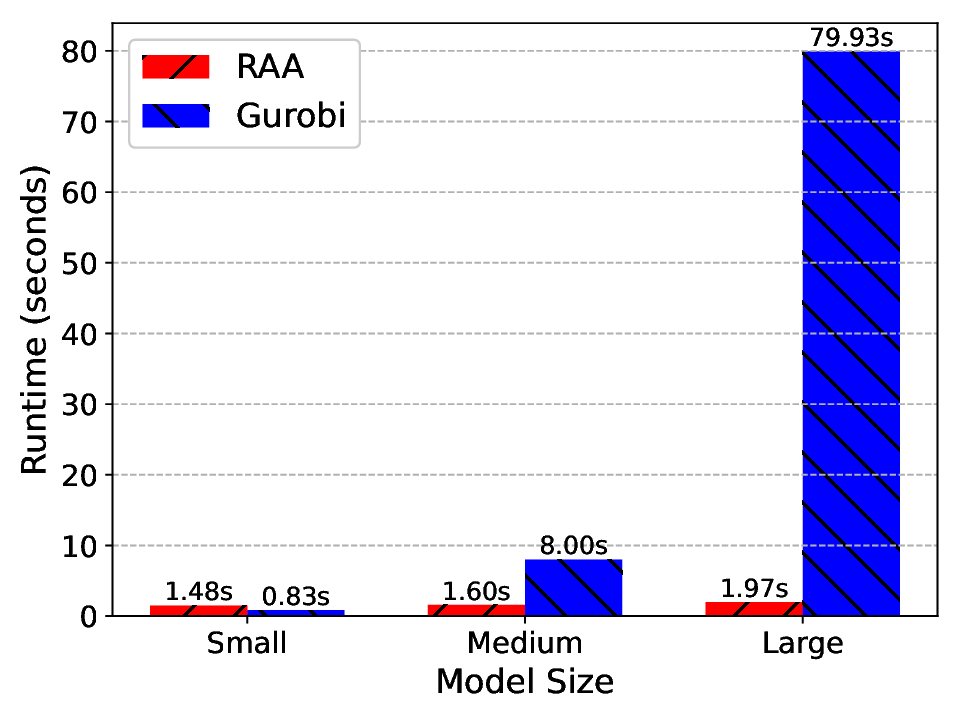}
    \label{fig:runtime_trend}
}
\subfloat[Approximation ratio $\sigma$, $\sigma_{ub}$]{
    \includegraphics[width=0.24\linewidth]{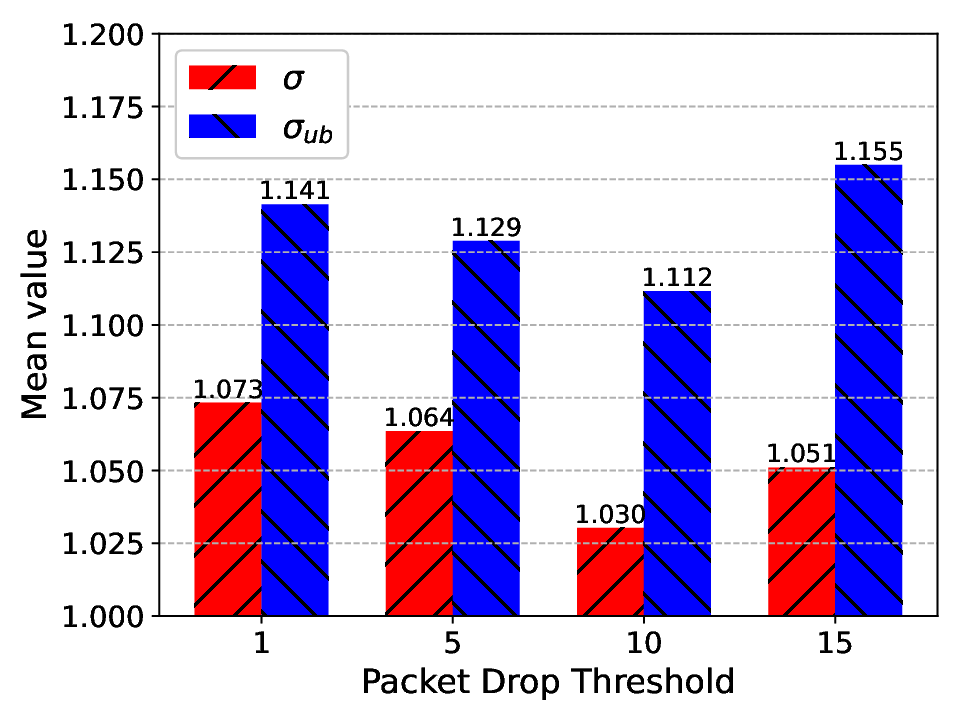}
    \label{fig:sigma}
}
  \caption{Optimality and runtime comparison} 
  \label{fig:res_alloc_stats}
\end{figure*}

A high-level overview of the VNF models is shown in \fig{fig:vnf_model}. For the shared layers, we use two hidden layers with [256, 128] nodes and Rectified Linear Unit (ReLU) activation. The mean and standard deviation branches each have one hidden layer each with 64 nodes. The mean output uses a linear activation function, while the standard deviation output employs the Softplus activation function to ensure non-negativity. Note that the model details, such as the type and number of layers, activation functions, and nodes per layer, may vary for different datasets. For training and inference, we use NVIDIA RTX 4090.

We divide the QoS dataset (cf. \sect{sec:dataset}) into training and validation sets. Additionally, we perform the procedure outlined in \sect{sec:dataset} to gather a test set consisting entirely of off-grid points, i.e., input combinations not present in the training or the validation sets. Subsequently, we train the model for 15,000 epochs with a learning rate of $10^{-4}$. \fig{fig:vnf_loss_curves} shows the negative log probability loss (i.e., $L_{QoS}$ in \eqref{eq:lqos}) as the VNF models train. We can observe that the validation error does not deviate from the training error, which shows that the model is not overfitting to the training data. Once trained, the RAN, Backhaul and Core models achieves a test loss of -3.21, -3.62 and -4.54, respectively. The Core model achieves the lowest training, validation, and test losses because the UPF is significantly simpler compared to the RAN and Core models. This is because the UPF primarily handles packet forwarding, which demands minimal CPU resources. Consequently, its performance remains relatively stable regardless of variations in CPU resource allocation.

Once the VNF models have been trained, we combine them as discussed in Section \ref{sec:slice_model} to compose an E2E slice model. \fig{fig:reg_plot} shows the regression plots capturing the correlation between ground-truth and the predictions. From the figure, we can see that most of the points lie on the $x=y$ regression line. The predictions achieve a Pearson correlation $\rho$ of 0.98 to the ground-truth, showing a high degree of alignment between the prediction and the ground-truth. However, a small number of points deviate from the regression line resulting in a normalized Wasserstein distance\footnote{\cite{yang2022deepqueuenet} defines normalized Wasserstein distance ($W_1$) as: Norm $W_1$ = $W_1(labels, prediction)/W_1([0]*labels, labels)$.} of 0.05. The model can be improved using a larger dataset and more complex ML models, however, a simple feed-forward network suffices for our 
testbed.

\subsection{Resource Allocation}
\label{sec:exp-resource-alloation}

In this section, we evaluate the optimality and runtime performance of \microopt. To achieve this, we leverage the recently released Gurobi Machine Learning library \cite{gurobi_ml}, which integrates ML models as constraints within optimization problems. However, the library currently has significant limitations, as it supports only a small selection of layers and activation functions \cite{gurobi_ml_support}. Consequently, for this evaluation, we employ a scalar version of our slice model, where the model predicts a single scalar value rather than an egress traffic feature distribution. While this approach is not ideal---since scalar predictions may result in higher-than-expected QoS degradation \eqref{beta} by failing to account for the full QoS distribution \cite{microopt}---it does provide a basis for direct comparison between \microopt and Gurobi. 

To integrate the E2E slice model within Gurobi, we introduce a constraint, i.e., the input variables to a VNF model must match the output of the upstream VNF model. Additionally, normalization and denormalization of the variables are required as intermediate steps between VNFs. This approach enables the construction of a slice model using pre-trained VNF models. Once the slice model has been integrated with Gurobi, we solve the \csp for several QoS thresholds (i.e., 1\% to 20\% Packet Loss\footnote{By strict definition in Section \ref{sec:prob_statement}, QoS should be defined as \textit{$\text{100}\% -$ Packet Loss}, where higher values indicate better performance. However, for simplicity and without loss of generality, we use packet loss (where lower is better) as the QoS metric for the rest of the manuscript.}) as well as different slice traffic, and compare the performance against \microopt. For this evaluation, we set the resource prices $p_m^{(i)}$ to 1. Therefore, the mean resource allocation is the same as the cost $c_i$, which is the minimization objective in \csp.

\fig{fig:mean_res_alloc} and \fig{fig:mean_runtime} show the mean resource allocation and the mean runtime for the \microopt and Gurobi solutions, respectively. In \fig{fig:mean_res_alloc}, we can see that as the packet drop threshold increases, both the solutions lead to a decrease in the resource allocation required to meet the QoS threshold. However, in all cases,  Gurobi leads to a more optimal solution compared to \microopt. Specifically, \microopt leads to a 5.45\% higher mean resource allocation across the different scenarios. However, in \fig{fig:mean_runtime}, we can see that \microopt requires significantly smaller runtime compared to Gurobi. Across the different scenarios, \microopt leads to more than 40 times faster runtime compared to the Gurobi solution. This highlights the advantage of leveraging gradient information during optimization.

The runtime of Gurobi may depend on the size of the VNF model. Let layer $l$ in the model have $n_{\textit{in},l}$ and $n_{\textit{out},l}$ as the number of inputs and outputs, respectively. Each linear layer with a ReLU activation function results in $n_{\textit{out},l}$ linear constraints and $n_{\textit{out},l}$ general constraints. Consequently, a neural network with $L$ layers results in a total of $\sum_{l=1}^{L-1} 2 \cdot n_{\textit{out},l} + n_{\textit{out},L}$ constraints. For instance, our VNF model with hidden layers comprising [256, 128, 64] nodes and 6 output nodes leads to 454 linear constraints and 448 general constraints. However, the runtime of the Gurobi solution may not always increase with a higher number of constraints. To test the solution time across various VNF model sizes, we define two additional VNF models. We refer to the default model with [256, 128, 64] hidden nodes as `large', and define the new models with [128, 64, 32] and [64, 32, 16] nodes as `medium' and `small' models, respectively. \fig{fig:runtime_trend} shows the mean runtime obtained by Gurobi and \microopt across the three model sizes. We can see that for small model, Gurobi is able to achieve slightly smaller runtime compared to \microopt. However, as the model size increases, the Gurobi runtime increases considerably, and surpasses the \microopt runtime for the medium and large models.

The above analysis highlights the advantage of using \microopt for resource allocation. It can support arbitrarily complex neural network models, such as 
Transformers, Long Short-term Memory \cite{yang2022deepqueuenet}, and GNNs \cite{ferriol2023routenet} through auto-differentiation frameworks (e.g.,  Pytorch \cite{torch}). Additionally, \microopt maintains a small runtime as the model sizes increase. It is worth stating that \microopt may not always lead to a feasible solution, if the initial solution is not feasible. In this case, we suggest using a coarse-grained \texttt{Gridsearch} to find a sub-optimal but feasible solution.

\noindent\textbf{Estimate of Approximation Ratio $\sigma$.} In Section \ref{sec:sigma}, we proposed an algorithm for finding the value for $\sigma_{ub}$ that can be used in place of the approximation ratio $\sigma$, which may not be available during practical scenario. Therefore, we validate the algorithm by calculating the value of $\sigma_{ub}$ and comparing it with $\sigma$. \fig{fig:sigma} compares the value of $\sigma$ and $\sigma_{ub}$ across the various tested scenarios. We can see that across all scenarios, $\sigma_{ub}$ leads to a higher value, \ie a loose upper-bound. Therefore, we can safely use this value in our subsequent online slice admission control algorithm (\opa).

\subsection{Online Slice Admission Control and Resource Allocation}
\label{sec:exp-ac}

\noindent\textbf{Setup.}
Our simulation considers a \sara problem over 200 time slots. As outlined in Section \ref{sec:prob_statement}, each SR $i$ is defined by
its value $v_i$, its service period $\calt_i$, a feature vector $\boldsymbol{\xi}_i$ (including the QoS requirement and peak traffic distribution), and a QoS degradation threshold $\beta_i$. In the simulation, five SPs submit SRs to the InP at a Poisson-distributed arrival rate of 5. The service period for these slices follows an exponential distribution with a parameter of 4. We define the maximum stay duration, $K$, as the 99th percentile value, which is 18.42. We assume that the slice traffic ($\Lambda_i$) remains constant, sampled from a uniform distribution of 1-7 users, where each user generates 5Mbps of ingress traffic. The QoS requirement, denoted as $q^i_{thresh}$, is randomly chosen from the set of $\{1\%, 5\%, 10\%, 15\%, 20\%\}$ Packet losses. The value $v_i$ offered by each SR is sampled from a uniform distribution $\mathcal{U}[1, 2]$. 
Given the relative scarcity of bandwidth resources compared to compute resources in mobile networks, we set the normalized resource capacity for CPU and bandwidth as $5$ and $10$, respectively. The above mentioned simulation parameters are chosen based on a number of related works \cite{sulaiman2022coordinated,TNET-1,TNSM_GNN}.
Based on the network model and the simulation parameters delineated above, we have calculated key parameters $L, U, V,$ and $\alpha$ as 0.045, 188.5, 6.0, and 27.43, respectively. 

We employ the commercial solver Gurobi to obtain the offline optimal solution. However, given the limited layer and activation function supported by Gurobi Machine Learning library, integrating the full data-driven slice model into Gurobi, and performing SAC presents a non-trivial challenge. To circumvent this problem, we deploy the resource allocation algorithm \microopt as a preprocessing step. We repeatedly pre-solve the \csp in~\eqref{p:csp} for a given pseudo-price, generating a set of feasible resource allocations. 
Since the solution only depends on the relative price of the resources, we pre-solve the resource allocation problem for various combinations of relative resource prices, for all different slice request possibilities. This enables us to reframe the offline resource allocation problem as a problem to select the cost minimized feasible allocation, and this selection problem can be seamlessly solved by Gurobi. To ensure a fair and consistent comparison, we also constrain the solution space of the \csp in the online algorithm ($\opa$) to the same set of feasible allocations.

\begin{figure}[t!]
 \centering
  \includegraphics[width=0.6\linewidth]{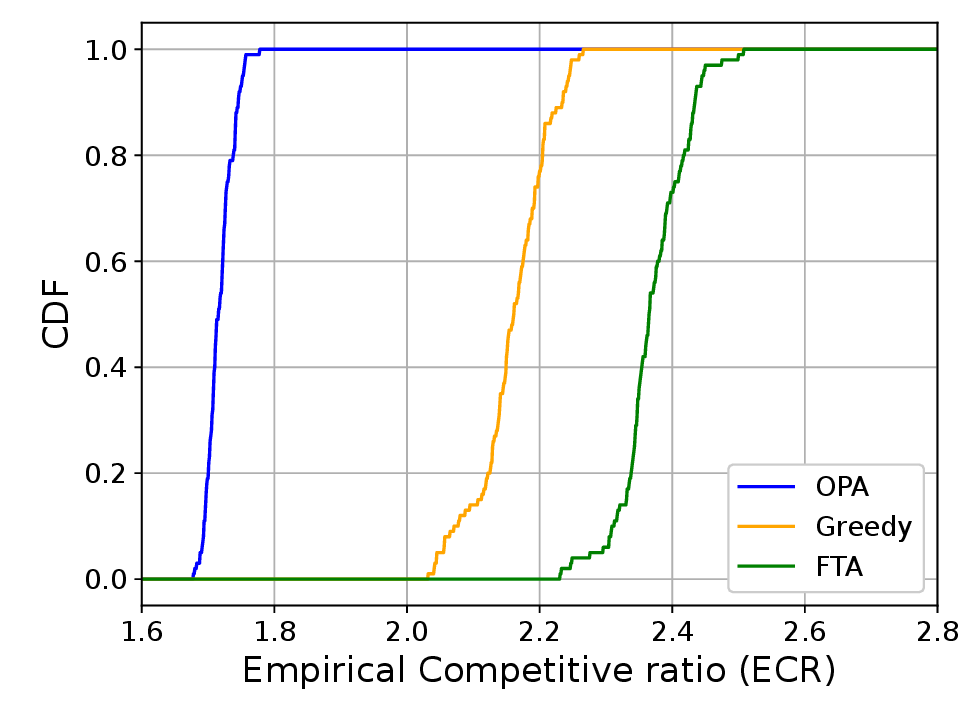}
    \caption{CDF of empirical competitive ratios}
    \label{fig:cr_cdf}
\end{figure}

\begin{figure*}[ht!]
\captionsetup{justification=centering}

\subfloat[Empirical competitive ratio]{
    \includegraphics[width=0.24\linewidth]{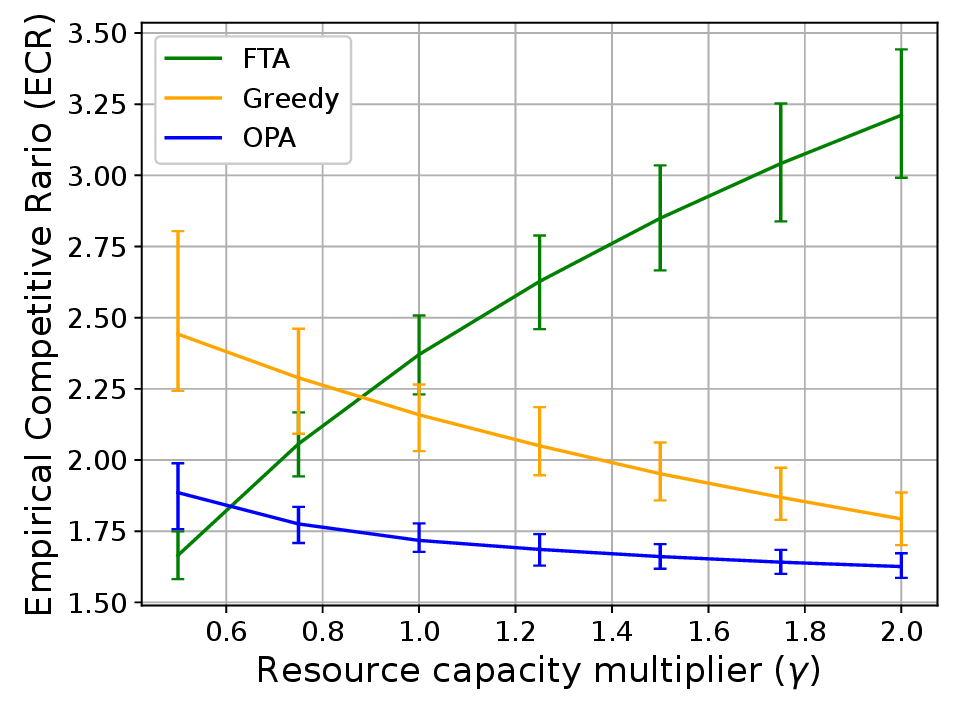}
    \label{fig:mean_cr}
}
\subfloat[Acceptance ratio]{
    \includegraphics[width=0.24\linewidth]{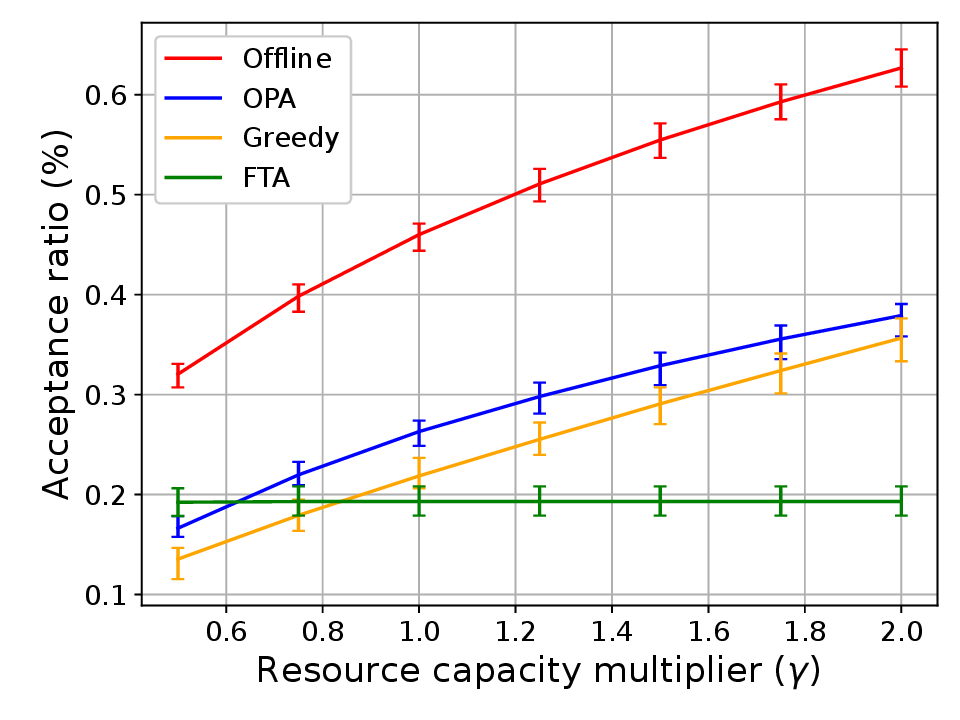}
    \label{fig:mean_ar}
}
\subfloat[Bandwidth resource utilization]{
    \includegraphics[width=0.24\linewidth]{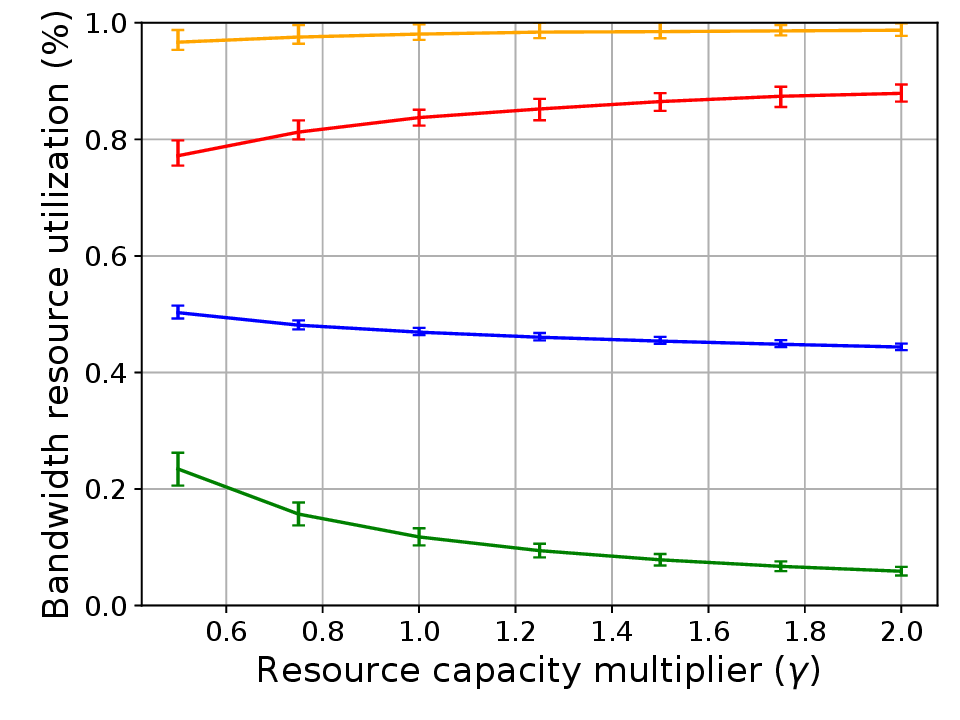}
    \label{fig:res_1}
}
\subfloat[Compute resource utilization]{
    \includegraphics[width=0.24\linewidth]{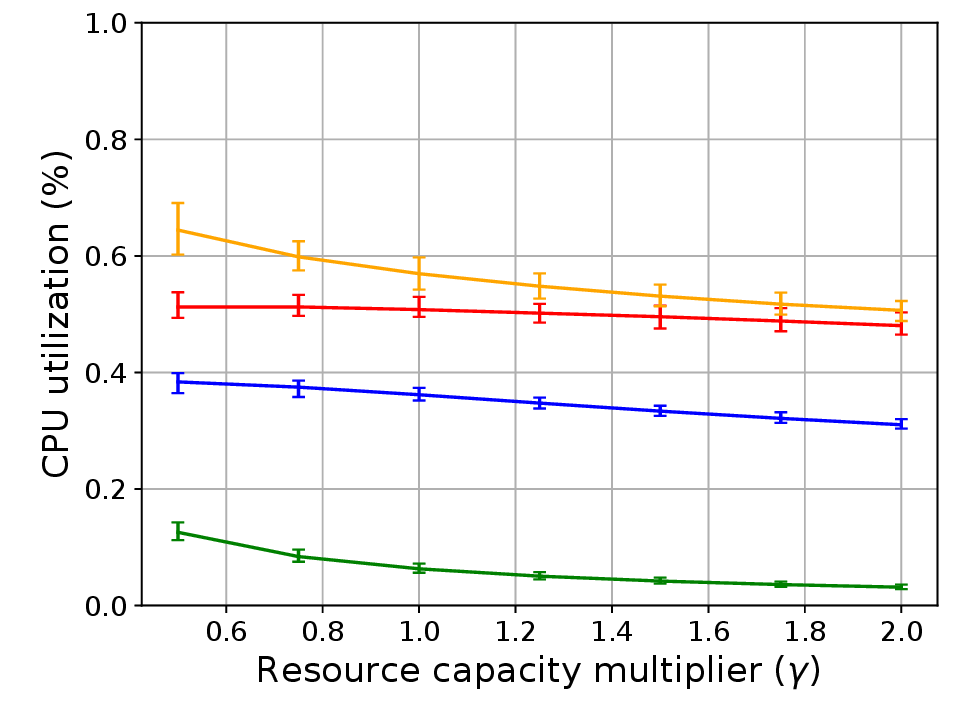}
    \label{fig:res_2}
}
  \caption{Comparison of online algorithms for SAC with varying resource capacities} 
  \label{fig:ac_stats}
\end{figure*}

\noindent\textbf{Comparison Approaches.} We compare our proposed \opa (in Algorithm~\ref{alg:ota}) with two heuristic AC policies describe below. However, it is important to note that we do not compare our approach against data-driven SAC methods, such as RL or multi-armed bandit algorithms. These methods require prior knowledge of the slice request distribution (e.g., inter-arrival times, service periods) for training. In contrast, our work addresses the online version of the SAC problem, which assumes no prior knowledge of future slice requests. Therefore, we limit our comparison to the Greedy and \fta approaches, which are commonly used in online settings~\cite{lechowicz2024online,sun2020competitive}.

\textit{Greedy.} 
This method accepts all incoming SRs and opts for the resource allocation strategy that minimizes the sum of resources used, while adhering to the resource capacity constraints. The Greedy approach tends to be effective in scenarios where the system load is low and the risk of resource bottlenecks is small. However, its performance may deteriorate under higher load conditions. In such cases, 
aggressively admitting slices may accept slices that arrive early but are of low values, while rejecting high-value slices that arrive later. 

\textit{Fixed-Threshold Algorithm (\emph{\fta}).} In contrast to our proposed \opa, which dynamically adjusts resource prices based on their utilization, \fta heuristically sets a fixed threshold price for SAC. The fixed price is set to strike a balance between greediness and conservativeness in accepting SRs. In particular, we adopt the fixed price $\sqrt{LU}$ that has been suggested for the online selection problem~\cite{el2001optimal}. This comparison allows  to gauge the effectiveness of \opa's dynamic pricing strategy against a fixed, yet well-established heuristic for resource allocation.

\noindent\textbf{Evaluation.}
We evaluate the performance of online algorithms for \osac based on empirical competitive ratios (ECRs). 
Based on the experimental setup, we generate a total of $100$ instances. For each instance $\cali$, we evaluate the objective value of an online algorithm under this instance and the offline optimal objective $\opt(\cali)$. The ECR of an instance $\cali$ is $\opt(\cali)/\alg(\cali)$. We demonstrate the performance of online algorithms using the statistics of ECRs over multiple instances.  
\fig{fig:cr_cdf} shows the cumulative density function (CDF) of ECRs achieved by the three online algorithms. 
Compared to Greedy and \fta, our proposed \opa 
not only achieves the minimum mean ECR but also excels in worst-case scenarios.
Specifically, when examining the mean ECR across all approaches, \opa achieves a mean ECR of $1.72$, which is notably lower than that of the \fta ($2.37$) and Greedy ($2.15$) approaches. In the context of worst-case performance, the proposed approach shows a maximum ECR of $1.77$, substantially outperforming both the \fta and Greedy approaches, which have maximum ECRs of $2.50$ and $2.26$, respectively. This underscores the proposed approach's ability to maintain a more stable and predictable performance even under challenging conditions.

To evaluate the performance of online algorithms under different levels of resource scarcity, we introduce a resource capacity multiplier $\gamma$ to adjust the available resources by scaling the base resource capacity $C_m$ by $\gamma$, with a larger $\gamma$ representing a smaller load. \fig{fig:mean_cr} and~\fig{fig:mean_ar} depict the ECRs and acceptance ratios (i.e., the proportion of admitted SRs among all SRs in an instance) across varying values of $\gamma$.
From \fig{fig:mean_cr}, it can be observed that \opa outperforms both Greedy and \fta approaches in terms of mean and maximum ECRs over the varying loads, showcasing the superior performance of \opa in both average and worst-case scenarios. 
We can observe that under small load conditions (\ie larger $\gamma$), both the proposed and Greedy approaches exhibit similar acceptance ratios. In this scenario, the Greedy approach's performance is comparable to \opa, which can be attributed to the abundance of resources. This abundance allows for the acceptance of low-value SRs without compromising the resource availability for potentially higher-value future SRs. Conversely, the \fta approach demonstrates a markedly conservative stance, resulting in significantly suboptimal performance when the load is small. Its conservative nature leads to missed opportunities for gains from low-value SRs. However, as the system load intensifies and the optimal acceptance ratio converges towards $50\%$, the Greedy approach's effectiveness diminishes. In this mid-load range, our proposed approach achieves the lowest competitive ratio by effectively navigating between opportunistic and cautious resource pricing. The scenario shifts further with increasing load, where the optimal acceptance ratio approaches $35\%$. At this juncture, the conservative nature of \fta becomes advantageous. Its predisposition to reject low-value SRs increases the likelihood of admitting future high-value SRs, achieving the best ECR under larger load conditions. Across all the different scenarios, \opa, \fta and greedy approaches achieve a mean ECR of $1.71$, $2.54$ and $2.07$, and a max ECR of $1.98$, $3.44$, and $2.80$, respectively. This shows that \opa achieves a mean improvement of $32.7\%$ and $17.6\%$ and a worst-case (max ECR) improvement of $42.23\%$ and $29.1\%$ over \fta and greedy approaches, respectively.

From \fig{fig:mean_ar}, we can see that the Greedy approach achieves an unexpectedly low acceptance ratio. To explain this behavior, we show the resource utilization of the different approaches under different load conditions in \fig{fig:res_1} and \fig{fig:res_2}. It can be observed that the Greedy approach indeed attains a high resource utilization. However, by accepting SRs that arrive early and are with high resource demand, the Greedy approach quickly uses up resources and creates a bottleneck for future SRs with low resource demand. As a result, the acceptance ratio of Greedy is low compared to other approaches. The resource utilization for price-based algorithms (\ie \opa and \fta), shows the expected behavior, i.e., opportunistically reserving resources for SRs with a high value and a low resource requirement.

\noindent\textbf{Robustness and Explainability.}
Compared to \opa, both \fta and Greedy are vulnerable to worst-case scenarios. Particularly, for \fta, if SRs have value densities just below the fixed threshold, \fta can potentially achieve unbounded worst-case ECR, approaching infinity. As for the greedy approach, SRs with the lower bound of value density may saturate one resource early, followed by SRs offering the upper bound of value density for all resources, resulting in a worst-case ECR of approximately $M U/L$. In contrast, our proposed algorithm \opa dynamically adjusts resource prices based on current utilization, ensuring robustness against such adversarial scenarios. As proven in Theorem \ref{thm:osac}, no set of SR arrivals can lead to a worst-case ECR exceeding $\frac{(\sigma + 1)\alpha}{2}$. This makes \opa robust and trustworthy for practical applications.

\begin{figure}[h!]
 \centering
  \includegraphics[width=0.6\linewidth]{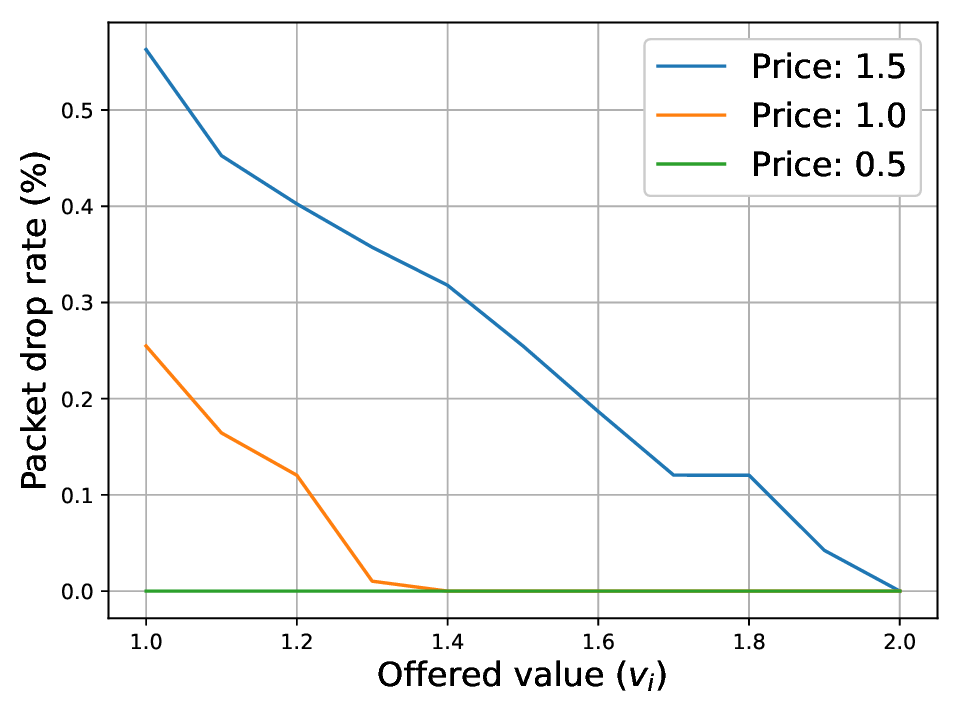}
    \caption{Maximum offerable QoS at different resource prices, and offered revenues}
    \label{fig:explainability}
\end{figure}

Several studies (e.g., \cite{sulaiman2022coordinated,Ghina_2020,TNSM_GNN}) in the literature leverage RL for SAC. These black-box approaches typically provide only an admission control decision for the currently arriving slice request, without any insight or reasoning behind the decision. In contrast, the pricing-based approach presented here allows the InP to precisely determine the difference between the resource allocation cost incurred to accept a slice request (i.e., using \microopt), and the offered revenue. This enhances explainability by making admission control decisions transparently tied to the cost and revenue. Additionally, \eqref{p:csp} can be reformulated to maximize the mean QoS given the offered value as follows: 

\begin{align}
\max_{\by_{i}} \quad& \mathbb{E}(q_{i}(n_i, \by_{i}))\nonumber\\
\label{eq:qsp}
    {\rm s.t.}\quad 
    & \sum\nolimits_{t\in\calt_i}\sum\nolimits_{m\in\calm} p_{m,t}^{(i)} y_{i,m}^t \leq v_i \\
    & 0 \le y_{i,m}^t \le R_{m}, \forall m\in\calm, t\in\calt_i\nonumber.
\end{align}

This constrained optimization problem can be addressed using the algorithm proposed in Section \ref{sec:opt}. To validate this, we solve \eqref{eq:qsp} under varying resource prices and offered revenues for a traffic rate of 7 users/s. As shown in \fig{fig:explainability}, higher resource prices (i.e., reflecting a more congested substrate network) necessitate a higher offered value to satisfy a given QoS threshold. This enables the InP to utilize real-time resource prices provided by the \opa to estimate the best achievable QoS for a given SR. Additionally, this information can be leveraged by the SP to adjust SR parameters (e.g., QoS requirements, offered revenue), which allows for a transparent and explainable SAC process.
\section{Conclusion} \label{sec:conclusion}

We introduced a novel data-driven framework for online SAC and resource allocation in 5G and beyond networks. Our resource allocation algorithm employs a DNN with the reparameterization trick to create a differentiable network model. This model allows for effective primal-dual optimization to minimize resource allocation cost, while adhering to QoS constraints. The AC algorithm integrates the resource allocation algorithm by dynamically adjusting the pseudo-prices for different resources, representing their scarcity. By strategically setting these prices, we prove that our AC algorithm maintains a bounded competitive ratio. Empirical results show that our network model effectively generalizes to test data, evidenced by minimal negative log probability loss. Furthermore, our resource allocation algorithm consistently exhibits a negligible optimality gap which having a significantly faster runtime compared to Gurobi. Finally, we compared our AC solution (\opa) with both greedy and fixed threshold-based methods. Our findings reveal that \opa consistently outperforms these methods, achieving the lowest mean ECR across various load scenarios and leads to up to $42\%$ improvement in worst-case ECR in the tested scenario.

One limitation of our proposed AC algorithm lies in its inherent conservativeness for attaining worst-case performance guarantees. To enhance its average-case performance and adaptability, a promising future direction is to fine-tune the value of $\alpha$ within the current pricing function, or directly learn the pricing function using AI techniques based on historical data. Another possible direction is to investigate integrating resource scaling into the framework to dynamically scale the resources of a slice once it has been accepted. In addition, we will also consider enhancing the network model to include additional KPIs, such as jitter and latency.

\section*{Acknowledgement}
This work was supported by Rogers Communications Canada Inc.

\small
\bibliographystyle{IEEEtranN}
\bibliography{IEEEabrv,main}

\end{document}